\newtheorem{theorem}{Theorem}
\newtheorem{proposition}[theorem]{Proposition}
\newtheorem{corollary}[theorem]{Corollary}
\newtheorem{lemma}[theorem]{Lemma}
\newtheorem{conjecture}[theorem]{Conjecture}
\newtheorem{example}[theorem]{Example}
\theoremstyle{definition}
\newtheorem{remark}[theorem]{Remark}
\let\set\mathbb
\DeclareMathOperator{\ord}{ord}
\DeclareMathOperator{\lc}{lc}
\def\O{\mathrm{O}}
\DeclareMathOperator{\Ker}{ker}
\DeclareMathOperator{\rem}{rem}
\DeclareMathOperator{\nonrem}{nrem}
\DeclareMathOperator{\LCLM}{lclm}
\DeclareMathOperator{\Op}{Op}
\DeclareMathOperator{\NOp}{NOp}
\DeclareMathOperator{\In}{in}
\DeclareMathOperator{\Cl}{Cl}
\DeclareMathOperator{\Res}{Res}
\DeclareMathOperator{\corank}{corank}
\begin{document}




\title{Bounds for Substituting Algebraic Functions into~D-finite~Functions}

\numberofauthors{2}
\author{%
 \alignauthor
 \leavevmode
 \mathstrut Manuel Kauers\titlenote{Supported by the Austrian Science Fund (FWF): Y464, F5004.}\\[\smallskipamount]
  \affaddr{\leavevmode\mathstrut Institute for Algebra / Johannes Kepler University}\\
  \affaddr{\leavevmode\mathstrut 4040 Linz, Austria}\\
  \affaddr{\leavevmode\mathstrut manuel.kauers@jku.at}
 \and
 \mathstrut Gleb Pogudin\titlenote{Supported by the Austrian Science Fund (FWF): Y464.}\\[\smallskipamount]
  \affaddr{\leavevmode\mathstrut Institute for Algebra / Johannes Kepler University}\\
  \affaddr{\leavevmode\mathstrut 4040 Linz, Austria}\\
  \affaddr{\leavevmode\mathstrut gleb.pogudin@jku.at}
}

\maketitle

\begin{abstract}
  It is well known that the composition of a D-finite function with an algebraic function is again D-finite.
  We give the first estimates for the orders and the degrees of annihilating operators for the compositions.
  We find that the analysis of removable singularities leads to an order-degree curve which is much more
  accurate than the order-degree curve obtained from the usual linear algebra reasoning. 
\end{abstract}

\section{Introduction}

A function $f$ is called D-finite if it satisfies an ordinary linear differential equation 
with polynomial coefficients, 
\[
 p_0(x)f(x)+p_1(x)f'(x)+\cdots+p_r(x)f^{(r)}(x)=0.
\]
A function $g$ is called algebraic if it satisfies a polynomial equation with polynomial
coefficients,
\[
 p_0(x)+p_1(x)g(x) + \cdots + p_r(x)g(x)^r=0.
\]
It is well known~\cite{stanley80} that when $f$ is D-finite and $g$ is algebraic,
the composition $f\circ g$ is again D-finite. For the special case $f=\mathrm{id}$ this reduces to Abel's theorem, which says
that every algebraic function is D-finite. This particular case was investigated closely
in~\cite{bostan07}, where a collection of bounds was given for the orders and degrees of the
differential equations satisfied by a given algebraic function. It was also pointed out in~\cite{bostan07}
that differential equations of higher order may have significantly lower degrees,
an observation that gave rise to a more efficient algorithm for transforming an algebraic equation
into a differential equation. Their observation has also motivated the study of order-degree
curves: for a fixed D-finite function~$f$, these curves describe the boundary of the region of all pairs $(r,d)\in\set N^2$
such that $f$ satisfies a differential equation of order~$r$ and degree~$d$.

\begin{example}\label{ex:1} We have fixed some randomly chosen operator
  $L\in C[x][\partial]$ of order $r_L=3$ and degree $d_L=4$
  and a random \break
  \null

  \kern-1.7\baselineskip
  \noindent\parbox[b]{.6\hsize}{%
   polynomial $P\in C[x][y]$ of $y$-degree $r_P=3$ and $x$-degree $d_P=4$.
  For some prescribed orders~$r$, we computed the smallest degrees~$d$ such that there is an operator $M$
  of order $r$ and degree $d$ that annihilates $f\circ g$ for all solutions $f$ of $L$
  and all solutions $g$ of~$P$.
  The points $(r,d)$ are shown in the figure on the right.}\kern-.25em
  \raisebox{-.6em}{%
    \begin{tikzpicture}[yscale=.46,scale=1.1]
      \draw[->] (0,0)--(0,5) node[left] {$d$};
      \draw[->] (0,0)--(1.5,0) node[right] {$r$};
      \foreach\x in {1,...,4} \draw (0,\x)--(-.1,\x) node[left] {\scriptsize$\x 00$};
      \foreach\x in {1,...,1} \draw (\x,0)--(\x,-.1) node[below] {\scriptsize$\x 00$};
      \draw (.5,0)--(.5,-.1) node[below] {\scriptsize$50$};
      \clip (-1,-1) rectangle (1.5,5);
      \draw
(.10,3.16) node {$\cdot$}
(.11,2.40) node {$\cdot$}
(.12,2.02) node {$\cdot$}
(.13,1.79) node {$\cdot$}
(.14,1.64) node {$\cdot$}
(.15,1.53) node {$\cdot$}
(.16,1.45) node {$\cdot$}
(.17,1.38) node {$\cdot$}
(.18,1.33) node {$\cdot$}
(.19,1.29) node {$\cdot$}
(.20,1.26) node {$\cdot$}
(.21,1.23) node {$\cdot$}
(.22,1.20) node {$\cdot$}
(.23,1.18) node {$\cdot$}
(.24,1.16) node {$\cdot$}
(.25,1.14) node {$\cdot$}
(.26,1.13) node {$\cdot$}
(.27,1.12) node {$\cdot$}
(.28,1.10) node {$\cdot$}
(.29,1.09) node {$\cdot$}
(.30,1.08) node {$\cdot$}
(.31,1.07) node {$\cdot$}
(.33,1.06) node {$\cdot$}
(.34,1.05) node {$\cdot$}
(.35,1.04) node {$\cdot$}
(.37,1.03) node {$\cdot$}
(.39,1.02) node {$\cdot$}
(.41,1.01) node {$\cdot$}
(.44,1.00) node {$\cdot$}
(.47,.99) node {$\cdot$}
(.50,.98) node {$\cdot$}
(.54,.97) node{$\cdot$}
(.59,.96) node{$\cdot$}
      (.66,.95) node{$\cdot$}
      (.74,.94) node{$\cdot$}      
      (.85,.93) node{$\cdot$}   
      (1.00,.92) node{$\cdot$}   
      (1.23,.91) node{$\cdot$}
      (1.61,.90) node{$\cdot$}
      ;
    \end{tikzpicture}}
\end{example}

Experiments suggested that order-degree curves are often just simple hyperbolas. \emph{A priori} knowledge of these
hyperbolas can be used to design efficient algorithms. For the case of creative 
telescoping of hyperexponential functions and hypergeometric terms, as well as for 
simple D-finite closure properties (addition, multiplication, Ore-action), bounds for order-degree curves
have been derived~\cite{chen12b,chen12c,kauers14f}. However, it turned out that these bounds are often not
tight. 

A new approach to order-degree curves has been suggested in~\cite{jaroschek13a},
where a connection was established between order-degree curves and apparent
singularities. Using the main result of this paper, very accurate order-degree
curves for a function~$f$ can be written down in terms of the number and the
cost of the apparent singularities of the minimal order annihilating operator
for~$f$. However, when the task is to compute an annihilating operator
from some other representation, e.g., a definite integral, then the information
about the apparent singularities of the minimal order operator is only \emph{a
  posteriori} knowledge. Therefore, in order to design efficient algorithms using the result of \cite{jaroschek13a},
we need to predict the singularity structure of the output operator in terms of
the input data. This is the program for the present paper.

First (Section~\ref{sec:linear_algebra}), we derive an order-degree bound for
D-finite substitution using the classical approach of considering a suitable
ansatz over the constant field, comparing coefficients, and balancing variables
and equations in the resulting linear system. This leads to an order-degree
curve which is not tight. Then (Section~\ref{sec:min_operator}) we estimate the
order and degree of the minimal order annihilating operator for the composition
by generalizing the corresponding result of~\cite{bostan07} from $f=\mathrm{id}$
to arbitrary D-finite~$f$. The derivation of the bound is a bit more tricky in
this more general situation, but once it is available, most of the subsequent
algorithmic considerations of~\cite{bostan07} generalize
straightforwardly. Finally (Section~\ref{sec:3}) we turn to the analysis of the
singularity structure, which indeed leads to much more accurate results. The
derivation is also much more straightforward, except for the required
justification of the desingularization cost. In practice, it is almost always
equal to one, and although this is the value to be expected for generic input,
it is surprisingly cumbersome to give a rigorous proof for this expectation.

Throughout the paper, we use the following conventions: 
\begin{itemize}
\item $C$ is a field of characteristic zero, $C[x]$ is the usual commutative ring of 
   univariate polynomials over~$C$. We write $C[x][y]$ or $C[x,y]$ for the commutative 
   ring of bivariate polynomials and $C[x][\partial]$ for the non-commu\-ta\-tive ring of linear
   differential operators with polynomial coefficients. In this latter ring, the multiplication
   is governed by the commutation rule $\partial x=x\partial +1$.

\kern-\medskipamount
\item $L\in C[x][\partial ]$ is an operator of order $r_L:=\deg_{\partial }(L)$ 
   with polynomial coefficients of degree at most~$d_L:=\deg_x(L)$.

\kern-\medskipamount
\item $P\in C[x,y]$ is a polynomial of degrees $r_P:=\deg_y(P)$ and $d_P:=\deg_x(P)$.
   It is assumed that $P$ is square-free as an element of $C(x)[y]$ and that 
   it has no divisors in $\bar C[y]$, where $\bar C$ is the algebraic closure of~$C$.

\kern-\medskipamount
\item $M\in C[x][\partial ]$ is an operator such that for every solution $f$ of~$L$
   and every solution $g$ of~$P$, the composition $f\circ g$ is a solution of~$M$.
   The expression $f \circ g$ can be understood either as a composition of analytic functions
   in the case $C = \mathbb{C}$, or in the following sense.
   We define $M$ such that for every $\alpha \in C$, for every solution $g \in C[[x - \alpha]]$ of $P$
   and every solution $f \in C[[x - g(\alpha)]]$ of $L$, $M$ annihilates $f\circ g$, which is a well-defined
   element of $C[[x - \alpha]]$. In the case $C = \mathbb{C}$ these two definitions coincide.
\end{itemize}


\section{Order-Degree-Curve\hfill\break by Linear Algebra}\label{sec:linear_algebra}

Let $g$ be a solution of~$P$, i.e., suppose that $P(x,g(x))=0$, and let $f$ be a
solution of~$L$, i.e., suppose that $L(f)=0$. Expressions involving $g$ and $f$ can
be manipulated according to the following three well-known observation:
\begin{enumerate}
\item\label{ob:1} (Reduction by~$P$) For each polynomial $Q\in C[x,y]$ with $\deg_y(Q)\geq r_P$
  there exists a polynomial $\tilde Q\in C[x,y]$ with $\deg_y(\tilde Q)\leq\deg_y(Q)-1$ and $\deg_x(\tilde Q)\leq\deg_x(Q)+d_P$
  such that
  \[
    Q(x,g)= \frac1{\lc_y(P)}\tilde Q(x,g).
  \]
  The polynomial $\tilde{Q}$ is the result of the first step of computing the pseudoremainder of $Q$ by $P$ w.r.t.~$y$.
\item\label{ob:2} (Reduction by~$L$) There exist polynomials $v,q_{j,k}\in C[x]$ of degree at most $d_Ld_P$ such that
  \[
    f^{(r_L)}\circ g=\frac1{v}\sum_{j=0}^{r_P-1}\sum_{k=0}^{r_L-1} q_{j,k}g^j \cdot (f^{(k)}\circ g).
  \]
  To see this, write $L=\sum_{k=0}^{r_L}l_k\partial ^k$ for some polynomials $l_k\in C[x]$ of degree at most~$d_L$. Then we have
  \[
    f^{(r_L)}\circ g = \frac{-1}{l_{r_L}\circ g}\sum_{k=0}^{r_L-1}(l_k\circ g) \cdot (f^{(k)}\circ g).
  \]
  By the assumptions on~$P$, the denominator $l_{r_L}\circ g$ cannot be zero. In other words, $\gcd(P(x,y),l_{r_L}(y))=1$ in $C(x)[y]$.
  For each $k=0,\dots,r_L-1$, consider an ansatz $AP+Bl_{r_L}=l_k$ for polynomials $A,B\in C(x)[y]$ of
  degrees at most $d_L-1$ and $r_P-1$, respectively, and compare coefficients with respect to~$y$.
  This gives $k$ inhomogeneous linear systems over $C(x)$ with $r_P+d_L$ variables and equations, which
  only differ in the inhomogeneous part but have the same matrix $M=\operatorname{Syl}_y(P,l_{r_L})$ for every~$k$.   
  The claim follows using Cramer's rule, taking into account that the coefficient matrix of the system
  has $d_L$ many columns with polynomials of degree~$d_P$ and $r_P$ many columns with polynomials of degree $\deg_x l_k(y)=0$
  (which is also the degree of the inhomogeneous part). Note that $v=\det(M)$ does not depend on~$k$.
\item\label{ob:3} (Multiplication by~$g'$) For each polynomial $Q\in C[x,y]$ with $\deg_y(Q)\leq r_P-1$
  there exist polynomials $q_j\in C[x]$ of degree at most $\deg_x(Q)+2r_Pd_P$ such that
  \[
    g' Q(x,g) = \frac1{w\lc_y(P)}\sum_{j=0}^{r_P-1}q_jg^j,
  \]
  where $w\in C[x]$ is the discriminant of~$P$.
  To see this, first apply Observation~\ref{ob:1} (Reduction by $P$) to rewrite $-QP_x$ as $T=\frac1{\lc_y(P)}\sum_{j=0}^{2r_P-2}t_j y^j$
  for some $t_j\in C[x]$ of degree $\deg_x(Q)+d_P$.
  Then consider an ansatz $AP+BP_y=\lc_y(P)T$ with unknown polynomials $A,B\in C(x)[y]$ of degrees
  at most $r_P-2$ and $r_P-1$, respectively, and compare coefficients with respect to~$y$.
  This gives an inhomogeneous linear system over $C(x)$ with $2r_P-1$ variables and equations.
  The claim then follows using Cramer's rule. 
\end{enumerate}

\begin{lemma}\label{lemma:x}
  Let $u = vw\lc_y(P)^{r_P}$, where $v$ and $w$ are as in the Observations \ref{ob:2} and~\ref{ob:3} above.
  Let $f$ be a solution of $L$ and $g$ be a solution of~$P$.
  Then for every $\ell\in\set N$ there are polynomials $e_{i,j}\in C[x]$ of degree at most $\ell\deg(u)$ such that
  \[
  \partial ^\ell (f\circ g) = \frac1{u^\ell}\sum_{i=0}^{r_P-1}\sum_{j=0}^{r_L-1} e_{i,j} g^i \cdot (f^{(j)}\circ g).
  \]
\end{lemma}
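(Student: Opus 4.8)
The plan is to prove this by induction on~$\ell$, with the case $\ell=0$ being the trivial representation $f\circ g = 1\cdot g^0\cdot(f^{(0)}\circ g)$ (here $e_{0,0}=1$ and all other $e_{i,j}=0$, which have degree $0 = 0\cdot\deg(u)$). For the induction step, I would assume the representation
\[
  \partial^\ell(f\circ g) = \frac1{u^\ell}\sum_{i=0}^{r_P-1}\sum_{j=0}^{r_L-1} e_{i,j}\, g^i\cdot(f^{(j)}\circ g)
\]
holds with $\deg_x(e_{i,j})\le \ell\deg(u)$, and differentiate both sides once with respect to~$x$. The right-hand side produces three kinds of terms: (a) the derivative of $\frac{e_{i,j}}{u^\ell}$, which contributes $\frac{e_{i,j}'u - \ell u' e_{i,j}}{u^{\ell+1}}$ times $g^i(f^{(j)}\circ g)$; (b) the factor $i g^{i-1}g'$ coming from differentiating $g^i$; and (c) the factor $g'\cdot(f^{(j+1)}\circ g)$ coming from the chain rule applied to $f^{(j)}\circ g$.

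The main work is to clear the extra $g'$ appearing in (b) and (c), and to push the index $j$ back below $r_L$ when it reaches $r_L$. For the $g'$-factors I would invoke Observation~\ref{ob:3}: each product $g'\,g^i$ (with $i\le r_P-1$) can be rewritten as $\frac1{w\lc_y(P)}\sum_{k=0}^{r_P-1} q_k g^k$ with $\deg_x(q_k)\le 2r_Pd_P$; applied to the relevant $g^i$ or $g^{i-1}$ this introduces one extra factor of $w\lc_y(P)$ in the denominator and raises the polynomial degree by at most $2r_Pd_P$. For the terms where differentiating $f^{(r_L-1)}\circ g$ produces $f^{(r_L)}\circ g$, I would first apply Observation~\ref{ob:3} to handle the $g'$, and then Observation~\ref{ob:2} to replace $f^{(r_L)}\circ g$ by $\frac1v\sum_{j,k} q_{j,k}g^j(f^{(k)}\circ g)$ with $\deg_x(q_{j,k})\le d_Ld_P$; this introduces one extra factor of~$v$ in the denominator. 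In all cases, after these reductions the $g$-powers may again exceed $r_P-1$ (from the products $g^i\cdot g^j$ in Observation~\ref{ob:2}), so I would finish by repeatedly applying Observation~\ref{ob:1} to reduce back to $\deg_y\le r_P-1$, each step costing one factor of $\lc_y(P)$ in the denominator and $d_P$ in the degree.

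The bookkeeping that makes this clean is the choice $u = vw\lc_y(P)^{r_P}$: I claim that after the above manipulations the common denominator of the new coefficients divides $u^{\ell+1}$, i.e. exactly one extra factor of $u$ suffices to absorb the one new factor of $v$, the one new factor of $w$, and the at most $r_P$ new factors of $\lc_y(P)$ (we need at most $r_P$ applications of Observation~\ref{ob:1} because the $g$-degree after Observation~\ref{ob:2} is at most $(r_P-1)+(r_P-1) < 2r_P$, hence fewer than $r_P$ reduction steps suffice; combined with the one $\lc_y(P)$ already present in Observation~\ref{ob:3} this stays within the budget $\lc_y(P)^{r_P}$ — this is precisely the step that needs to be checked with care). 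For the degree bound, multiplying through by $u^{\ell+1}$ turns the new coefficients into sums of products of old coefficients (degree $\le\ell\deg(u)$), derivatives $e_{i,j}'$ or $u'$ (degree $\le\ell\deg(u)$), and the polynomials $q_k$, $q_{j,k}$ from the Observations (degree $\le 2r_Pd_P$, $d_Ld_P$, or $d_P$), all bounded in total by $\ell\deg(u)+\deg(u)=(\ell+1)\deg(u)$, since $\deg(u) = \deg(v)+\deg(w)+r_P\deg(\lc_y(P))$ dominates each of the individual contributions $2r_Pd_P$, $d_Ld_P$, $d_P$ coming from one round of reductions. The main obstacle is therefore not any single computation but verifying that the denominator budget is tight, i.e. that one copy of $u$ genuinely covers \emph{all} the denominators introduced in the induction step simultaneously; I would isolate this as the crux of the argument and treat the degree estimate as a routine consequence.
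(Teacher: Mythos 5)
Your proposal is correct and follows essentially the same route as the paper's proof: induction on $\ell$, with the induction step handled by the three Observations and the denominator/degree budget closed by noting that one factor of $v$, one of $w\lc_y(P)$, and $r_P-1$ further factors of $\lc_y(P)$ (with matching degree increments summing to at most $\deg(u)$) suffice per step. The only difference is the order in which you apply the Observations (you do \ref{ob:3} before \ref{ob:2} and \ref{ob:1}, the paper does \ref{ob:2}, then \ref{ob:1}, then \ref{ob:3}), which does not affect the accounting.
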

\begin{proof}
This is evidently true for $\ell=0$. Suppose it is true for some~$\ell$. Then
\begin{alignat*}1
  &\partial ^{\ell+1} (f\circ g) =
  \sum_{i=0}^{r_P-1}\sum_{j=0}^{r_L-1} \biggl(\frac{e_{i,j}}{u^\ell} g^i \cdot (f^{(j)}\circ g)\biggr)'\\
  &=
  \sum_{i=0}^{r_P-1}\sum_{j=0}^{r_L-1} \biggl(
  \frac{e_{i,j}'u - \ell e_{i,j}u'}{u^{\ell+1}} g^i \cdot (f^{(j)}\circ g) \\
  &\qquad{}+
  \frac{e_{i,j}}{u^\ell}\Bigl(i\,g^{i-1} \cdot (f^{(j)}\circ g) + g^i \cdot (f^{(j+1)}\circ g)\Bigr) g'
  \biggr).
\end{alignat*}
The first term in the summand expression already matches the claimed bound. To complete the proof, we
show that
\begin{equation}\label{eq:5}
\bigl(i\,g^{i-1} \cdot (f^{(j)}\circ g) + g^i \cdot (f^{(j+1)}\circ g)\bigr) g'=\frac1{u}\sum_{k=0}^{r_P-1}q_kg^k
\end{equation}
for some polynomials~$q_k$ of degree at most~$\deg(u)$. Indeed, the only critical term is $f^{(r_L)}\circ g$. According to
Observation~\ref{ob:2}, $f^{(r_L)}\circ g$ can be rewritten as $\frac1{v}\sum_{j=0}^{r_P-1}\sum_{k=0}^{r_L-1}q_{j,k}g^j \cdot (f^{(k)}\circ g)$
for some $q_{j,k}\in C[x]$ of degree at most~$d_Ld_P$. This turns the left hand side of~\eqref{eq:5} into an expression
of the form $\frac1{v}\sum_{j=0}^{2r_P-2}\tilde q_{j,k}g^j \cdot (f^{(k)}\circ g)$ for some polynomials
$\tilde q_{j,k}\in C[x]$ of degree at most~$d_Ld_P$. An $(r_P-1)$-fold application of Observation~\ref{ob:1} 
brings this expression to the form $\frac1{v\lc_y (P)^{r_P-1}}\sum_{j=0}^{r_P-1}\bar q_{j,k}g^j \cdot (f^{(k)}\circ g)$
for some polynomials $\bar q_{j,k}\in C[x]$ of degree at most $d_Ld_P+(r_P-1)d_P$. Now Observation~\ref{ob:3}
completes the induction argument.
\end{proof}

\begin{theorem}\label{thm:linalgcurve}
Let $r,d\in\set N$ be such that
\[
r \geq r_Lr_P
\quad\text{and}\quad
d \geq \frac{r(3r_P+d_L-1)d_P r_L r_P}{r+1-r_L r_P}.
\]
Then there exists an operator $M\in C[x][\partial ]$ of order $\leq r$ and degree $\leq d$
such that for every solution~$g$ of~$P$ and every solution~$f$ of~$L$ the composition
$f\circ g$ is a solution of~$M$.
In particular, there is an operator $M$ of order $r=r_Lr_P$ and degree
$(3r_P+d_L-1)d_P r_L^2 r_P^2 = \O\bigl((r_P+d_L)d_P r_L^2 r_P^2\bigr)$.
\end{theorem}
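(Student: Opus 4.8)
The plan is the standard undetermined-coefficients argument, with Lemma~\ref{lemma:x} providing the normal form that makes everything go through. Keep the notation $u=vw\lc_y(P)^{r_P}$ from that lemma. First I would pin down $\deg(u)$ and check $u\neq0$. Here $v=\det\bigl(\operatorname{Syl}_y(P,l_{r_L})\bigr)=\Res_y(P,l_{r_L})$ is nonzero since $\gcd_y(P,l_{r_L})=1$, and its $x$-degree is at most $d_Ld_P$ by the column-degree count in Observation~\ref{ob:2}; the discriminant $w$ equals, up to a unit, $\Res_y(P,P_y)$, the determinant of a $(2r_P-1)\times(2r_P-1)$ Sylvester matrix with entries of $x$-degree $\leq d_P$, so $\deg(w)\leq(2r_P-1)d_P$, and $w\neq0$ because $P$ is square-free in $C(x)[y]$; finally $\deg\bigl(\lc_y(P)^{r_P}\bigr)\leq r_Pd_P$. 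Adding these gives $u\neq0$ and $\deg(u)\leq(3r_P+d_L-1)d_P$.

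Next I would make the ansatz $M=\sum_{\ell=0}^{r}p_\ell\partial^\ell$ with unknown $p_\ell\in C[x]$ of degree $\leq d$, i.e.\ $(r+1)(d+1)$ unknowns over~$C$. Feeding each $\partial^\ell(f\circ g)$ through Lemma~\ref{lemma:x} and multiplying by the common denominator $u^r$ gives
\[
u^r\cdot M(f\circ g)=\sum_{i=0}^{r_P-1}\sum_{j=0}^{r_L-1}\Bigl(\sum_{\ell=0}^{r}p_\ell\,u^{r-\ell}\,e^{(\ell)}_{i,j}\Bigr)\,g^i\,(f^{(j)}\circ g),
\]
where the $e^{(\ell)}_{i,j}\in C[x]$ from the lemma have degree $\leq\ell\deg(u)$; hence each of the $r_Lr_P$ inner coefficients $\sum_\ell p_\ell u^{r-\ell}e^{(\ell)}_{i,j}$ is a polynomial of degree $\leq d+r\deg(u)$, carrying at most $d+r\deg(u)+1$ coefficients over~$C$.

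I would then note that it suffices to force all $r_Lr_P$ of these coefficient polynomials to vanish: the right-hand side above is then~$0$, and since $u^r$ is a nonzero element of $C[x]$ and $M(f\circ g)$ lies in the integral domain $C[[x-\alpha]]$ fixed in the introduction, this forces $M(f\circ g)=0$ for all solutions $g$ of~$P$ and $f$ of~$L$. The vanishing conditions form a homogeneous linear system over~$C$ with $(r+1)(d+1)$ unknowns and at most $r_Lr_P\bigl(d+r\deg(u)+1\bigr)$ equations, so it has a nontrivial solution whenever
\[
(r+1)(d+1)>r_Lr_P\bigl(d+r\deg(u)+1\bigr),
\]
equivalently $(r+1-r_Lr_P)(d+1)>r_Lr_P\,r\deg(u)$. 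When $r\geq r_Lr_P$ the integer $r+1-r_Lr_P$ is positive, so this holds as soon as $d\geq\frac{r_Lr_P\,r\,\deg(u)}{r+1-r_Lr_P}$, which the hypothesis on~$d$ ensures in view of $\deg(u)\leq(3r_P+d_L-1)d_P$. A nontrivial solution yields a nonzero operator~$M$ of order $\leq r$ and degree $\leq d$; specializing to $r=r_Lr_P$, where the denominator becomes~$1$, gives the ``in particular'' statement.

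The only step requiring genuine care is the degree count for $u$ together with the verification $u\neq0$; the rest is routine bookkeeping. I would stress that vanishing of the coefficient polynomials is used only as a \emph{sufficient} condition for $M(f\circ g)=0$, which is precisely why no linear-independence claim about the functions $g^i(f^{(j)}\circ g)$ is needed anywhere in the argument.
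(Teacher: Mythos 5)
Your proposal is correct and follows essentially the same route as the paper's own proof: the undetermined-coefficients ansatz, reduction via Lemma~\ref{lemma:x}, clearing the denominator $u^r$, and counting equations against unknowns, with the same bound $\deg(u)\leq(3r_P+d_L-1)d_P$. The only additions are your explicit verification that $u\neq0$ and your remark that vanishing of the coefficient polynomials is merely sufficient, both of which the paper leaves implicit.
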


\begin{proof}
Let $g$ be a solution of $P$ and $f$ be a solution of $L$. 
Then we have $P(x,g(x))=0$ and $L(f) = 0$, and we seek an operator
$M=\sum_{i=0}^d\sum_{j=0}^r c_{i,j} x^i \partial ^j\in C[x][\partial ]$ such that
$M(f\circ g)=0$. Let $r\geq r_Lr_P$ and consider an ansatz
\[
  M=\sum_{i=0}^d\sum_{j=0}^r c_{i,j}x^i\partial ^j
\]
with undetermined coefficients $c_{i,j}\in C$.

Let $u$ be as in Lemma~\ref{lemma:x}. Then applying $M$ to $f\circ g$ and multiplying by $u^r$ gives an expression
of the form
\[
\sum_{i=0}^{d+r\deg(u)}\sum_{j=0}^{r_P-1}\sum_{k=0}^{r_L-1} q_{i,j,k}x^i g^j \cdot (f^{(k)}\circ g),
\]
where
the $q_{i,j,k}$ are $C$-linear combinations of the undetermined coefficients~$c_{i,j}$. Equating all the $q_{i,j,k}$
to zero leads to a linear system over $C$ with at most $(1+d+r\deg(u))r_Lr_P$ equations and exactly $(r+1)(d+1)$
variables. This system has a nontrivial solution as soon as
\begin{alignat*}3
       &&  (r+1)(d+1) &> (1+d+r\deg(u))r_L r_P \\
  \iff &&  (r+1-r_L r_P)(d+1) &> r\,r_Lr_P\deg(u) \\
  \iff &&  d & > -1 + \frac{r\,r_Lr_P\deg(u)}{r+1-r_Lr_P}.
\end{alignat*}
The claim follows because $\deg(u)\leq d_Pd_L+(2r_P-1)d_P+r_Pd_P=(3r_P+d_L-1)d_P$. 
\end{proof}


\section{A Degree Bound for\hfill\break the Minimal Operator}\label{sec:min_operator}

According to Theorem~\ref{thm:linalgcurve}, there is
operator $M$ of order $r=r_Lr_P$ and degree $d=\O((r_P+d_L)d_Pr_L^2r_P^2)$.
Usually there is no operator of order less than $r_Lr_P$, but if such an operator
accidentally exists, Theorem~\ref{thm:linalgcurve} makes no statement about its degree.
The result of the present section (Theorem~\ref{thm:min_operator} below) is a degree bound
for the minimal order operator, which also applies when its order is less than~$r_Lr_P$,
and which is better than the bound of Theorem~\ref{thm:linalgcurve} if the minimal order
operator has order~$r_Lr_P$.

The following Lemma is a variant of Lemma~\ref{lemma:x} in which $g$ is allowed to appear
in the denominator, and with exponents larger than $r_P-1$. This allows us to keep the
$x$-degrees smaller. 

\begin{lemma}\label{lemma:derivative_of_composition}
  Let $f$ be a solution of $L$ and $g$ be a solution of~$P$.
  For every $\ell \in \set N$, there exist polynomials $E_{\ell, j} \in C[x, y]$ for $0 \leq j < r_L$ 
  such that $\deg_x E_{\ell, j} \leq \ell(2d_P - 1)$ and $\deg_y E_{\ell, j} \leq \ell(2r_P+ d_L - 1)$ for all $0 \leq j < r_L$, and
  $$
    \partial^\ell \left( f \circ g\right) = \frac{1}{U(x, g)^\ell} \sum\limits_{j = 0}^{r_L - 1} E_{\ell,j}(x, g) (f^{(j)} \circ g),
  $$
  where $U(x, y) = P_y^2(x, y) l_{r_L}(y)$.
\end{lemma}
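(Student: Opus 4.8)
The plan is to prove the formula by induction on $\ell$, exactly as in Lemma~\ref{lemma:x}, but keeping $g$ in the denominator rather than clearing it with Observation~\ref{ob:3}. For $\ell = 0$ the statement is trivial with $E_{0,0} = 1$. For the inductive step, I would differentiate the right-hand side of the formula for $\partial^\ell(f\circ g)$. Writing $U = U(x,g)$ for brevity, each summand $U^{-\ell} E_{\ell,j}(x,g)(f^{(j)}\circ g)$ produces, by the product and chain rules, three kinds of terms: one with $(E_{\ell,j})_x$, one with $(E_{\ell,j})_y \, g'$, and one with $(f^{(j+1)}\circ g)\,g'$ (plus the $-\ell U' U^{-\ell-1}$ piece, where $U' = U_x(x,g) + U_y(x,g)g'$). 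The key point is that $g' = -P_x(x,g)/P_y(x,g)$, so every factor of $g'$ introduces exactly one power of $P_y(x,g)$ in the denominator; since $U$ already contains $P_y^2$, after multiplying through by $U^{\ell+1}$ the worst denominator contribution from a single $g'$ is absorbed.

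**Main steps.** First I would set up the three resulting families of terms and observe that the only one requiring reduction is the one containing $f^{(r_L)}\circ g$, which appears when $j = r_L - 1$. For this I would use Observation~\ref{ob:2} to rewrite $f^{(r_L)}\circ g = l_{r_L}(g)^{-1}\sum_{k=0}^{r_L-1}l_k(g)(f^{(k)}\circ g)$; this is where the factor $l_{r_L}(y)$ in $U$ is needed, so that the new denominator $l_{r_L}(g)$ is swallowed by the $U^{\ell+1}$ we are multiplying by. Second, I would clear denominators: multiply the whole expression by $U^{\ell+1}$, collect the coefficient of each $f^{(j)}\circ g$ as a polynomial in $C[x,y]$ evaluated at $y = g$, and call it $E_{\ell+1,j}(x,g)$. (One must keep track that $g^i$ for $i$ up to the $y$-degree of these polynomials, and negative powers from $g' = -P_x/P_y$, all combine into a genuine bivariate polynomial in $x$ and $y$ after multiplying by the $P_y^2$-factors — this is routine bookkeeping, not a difficulty.) Third, I would track the degree growth: differentiation in $x$ adds nothing; multiplying by $g'$ effectively multiplies by $P_x/P_y$ and, after rebalancing the $P_y$ powers, costs at most $d_P - 1$ in $x$ and at most $\max(d_P, r_P) + \text{(something)}$ in $y$; multiplying by one extra power of $U$ costs $\deg_x U = 2d_P - 1$ and $\deg_y U = 2r_P + d_L - 1$; and the reduction of $f^{(r_L)}\circ g$ via Observation~\ref{ob:2} replaces the coefficients by polynomials in $g$ of $y$-degree $< r_L \le r_P \le 2r_P + d_L - 1$ with $x$-degree $\le d_L$. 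Adding these up should reproduce the claimed bounds $\deg_x E_{\ell+1,j} \le (\ell+1)(2d_P-1)$ and $\deg_y E_{\ell+1,j} \le (\ell+1)(2r_P+d_L-1)$.

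**Main obstacle.** The tricky part is the degree accounting in the $y$-variable, making sure that the step $E_{\ell,j} \mapsto E_{\ell+1,j}$ really adds at most $2r_P + d_L - 1$ to the $y$-degree rather than something larger. The delicate contributions are: (i) the chain-rule term $(E_{\ell,j})_y\,g'$, where substituting $g' = -P_x(x,g)/P_y(x,g)$ and then multiplying by the extra $P_y^2$ from $U^{\ell+1}/U^\ell$ yields a net factor of $P_x \cdot P_y$ of $y$-degree $(r_P - 1) + r_P = 2r_P - 1$ — this is why the bound has $2r_P$ and not $r_P$; and (ii) the reduction via Observation~\ref{ob:2}, which contributes the $d_L$. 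I would verify carefully that no term picks up $r_P$ from the reduction and $2r_P$ from a $g'$ simultaneously; the point is that the $f^{(r_L)}\circ g$ term, before reduction, already has one fewer factor of something because it arises only from the $(f^{(j+1)}\circ g)g'$ branch, so the accounting stays within budget. Once the inductive degree estimates are checked, the lemma follows immediately.
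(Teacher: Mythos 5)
Your proposal is correct and follows essentially the same route as the paper's proof: induction on $\ell$, the same three-way split of the differentiated terms, the same use of Observation~\ref{ob:2} for $f^{(r_L)}\circ g$ (with $l_{r_L}$ in $U$ cancelling the new denominator), and the same observation that each factor $g'=-P_x/P_y$ is absorbed by one of the two $P_y$'s in $U$. One tiny bookkeeping slip: $\deg_x U=2d_P$, not $2d_P-1$, but this is harmless because every multiplier actually arising in the induction step ($U_x$, $g'U_y$, $g'U=-P_xP_yl_{r_L}(g)$) has $x$-degree $2d_P-1$, exactly as your net accounting assumes.
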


\begin{proof}
  This is true for $\ell = 0$. Suppose it is true for some $\ell$. Then
  \begin{multline*}
    \partial^{\ell + 1}(f \circ g) = \left( \frac{1}{U(x, g)^\ell} \sum\limits_{j = 0}^{r_L - 1} E_{\ell, j}(x, g) (f^{(j)} \circ g) \right)^{\prime} \\
    = \sum\limits_{j = 0}^{r_L - 1} \Bigl( \frac{\ell(U_x + g' U_y)}{U^{\ell + 1}} E_{i, j} \cdot (f^{(j)} \circ g) \\
    + \frac{1}{U^{\ell}}((E_{\ell, j})_x + g' \cdot (E_{\ell, j})_y) (f^{(j)} \circ g) + \frac{1}{U^{\ell}} E_{\ell, j} g' \cdot (f^{(j + 1)} \circ g) \Bigr)
  \end{multline*}

  We consider the summands separately. 
  In $\frac{\ell(U_x + g' U_y)}{U^{\ell + 1}}$, $U_x$ is already a polynomial in $x$ and $g$ of bidegree at most $(2d_p - 1, 2r_P + d_L - 1)$.
  Since $g' = \frac{-P_x(x, g)}{P_y(x, g)}$ and $U_y$ is divisible by $P_y$, $g' U_y$ is also a polynomial with the same bound for the bidegree.
 
  Futhermore, we can write
  \[
    (E_{\ell, j})_x + g' \cdot (E_{\ell, j})_y = \frac{1}{U} (U(E_{\ell, j})_x - P_xP_yl_{r_L}(g)(E_{\ell, j})_y),
  \]
  where the expression in the parenthesis satisfies the stated bound.
  
  For $j + 1 < r_L$, the last summand can be written as 
  \begin{equation}\label{eq:last_summand1}
    \frac{1}{U^{\ell}} E_{\ell, j} g' \cdot (f^{(j + 1)} \circ g) = \frac{P_xP_yl_{r_l}(g)}{U^{\ell + 1}}E_{\ell, j} \cdot (f^{(j + 1)} \circ g).
  \end{equation}
  For $j = r_L + 1$, due to Observation~\ref{ob:2}
  \begin{equation}\label{eq:last_summand2}
    g' \cdot (f^{(r_L)} \circ g) = -\frac{P_xP_y}{U} \sum\limits_{j = 0}^{r_L - 1} l_j(g) (f^{(j)} \circ g).
  \end{equation}
  Right-hand sides of both~\eqref{eq:last_summand1} and~\eqref{eq:last_summand2} satisfy the bound.
\end{proof}

Let $f_1, \ldots, f_{r_L}$ be $C$-linearly independent solutions of~$L$, and let $g_1, \ldots, g_{r_P}$ be distinct solutions of $P$.
By $r$ we denote the $C$-dimension of the $C$-linear space $V$ spanned by $f_i \circ g_j$ for all $1 \leq i \leq r_L$ and $1 \leq j \leq r_P$.
The order of the operator annihilating $V$ is at least $r$.
We will construct an operator of order $r$ annihilating $V$ using Wronskian-type matrices.

\begin{lemma}\label{lemma:wronskian}
  There exists a matrix $A(x,y)\in C[x,y]^{(r + 1) \times r_L}$ such that 
  the bidegree of every entry of the $i$-th row of $A(x, y)$ does not exceed $(2rd_P - i + 1, r(2r_P + d_L - 1))$ and 
  $f \in V$ if and only if the vector $(f, \ldots, f^{(r)})^{T}$ lies in 
  the column space of the $(r + 1) \times r_Lr_P$ matrix $\begin{pmatrix} A(x, g_1) & \cdots & A(x, g_{r_P})\end{pmatrix}$.
\end{lemma}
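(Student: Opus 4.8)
The plan is to write $A(x,y)$ down explicitly from Lemma~\ref{lemma:derivative_of_composition} and then recognize the column space of the stacked matrix by a change of basis given by a ``composed Wronskian''. With $U(x,y)=P_y(x,y)^2\,l_{r_L}(y)$ as in Lemma~\ref{lemma:derivative_of_composition} (so $\deg_x U\le 2d_P$ and $\deg_y U\le 2r_P+d_L-2$), I would define the $(\ell+1,j+1)$ entry of $A$, for $0\le\ell\le r$ and $0\le j<r_L$, to be $A_{\ell+1,j+1}(x,y):=U(x,y)^{r-\ell}E_{\ell,j}(x,y)$, where the $E_{\ell,j}$ are the polynomials supplied by Lemma~\ref{lemma:derivative_of_composition}. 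Multiplying the identity of that lemma by $U(x,g)^{r-\ell}$ shows that for every solution $f$ of $L$ and every root $g$ of $P$
\[
  A(x,g)\cdot\bigl(f\circ g,\ f'\circ g,\ \dots,\ f^{(r_L-1)}\circ g\bigr)^{T}
  \;=\; U(x,g)^{r}\,\bigl(f\circ g,\ \partial(f\circ g),\ \dots,\ \partial^{r}(f\circ g)\bigr)^{T}.
\]
The bidegree bound for row $i=\ell+1$ is then immediate: the $x$-degree is at most $(r-\ell)2d_P+\ell(2d_P-1)=2rd_P-\ell=2rd_P-i+1$, and the $y$-degree is at most $(r-\ell)(2r_P+d_L-2)+\ell(2r_P+d_L-1)=r(2r_P+d_L-2)+\ell\le r(2r_P+d_L-1)$.

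Next, for fixed $j$, let $W_j$ be the $r_L\times r_L$ matrix with $(k,i)$ entry $f_i^{(k)}\circ g_j$ (the Wronskian of the fundamental system $f_1,\dots,f_{r_L}$ composed with $g_j$) and let $\mathcal W_j$ be the $(r+1)\times r_L$ matrix with $(\ell,i)$ entry $(f_i\circ g_j)^{(\ell)}$. Applying the displayed identity to $f=f_1,\dots,f_{r_L}$ gives $A(x,g_j)\,W_j=U(x,g_j)^{r}\,\mathcal W_j$. Here $U(x,g_j)\neq 0$, because $\gcd(P,P_y)=\gcd(P,l_{r_L})=1$ in $C(x)[y]$ forces $g_j$ to annihilate neither $P_y$ nor $l_{r_L}$; and $\det W_j\neq 0$, since it is the (nowhere vanishing) Wronskian of the $C$-linearly independent solutions $f_1,\dots,f_{r_L}$ of $L$ composed with the non-constant algebraic function $g_j$. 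Hence $A(x,g_j)$ and $\mathcal W_j$ have the same column space, so the column space of $\bigl(A(x,g_1)\mid\cdots\mid A(x,g_{r_P})\bigr)$ equals that of $\bigl(\mathcal W_1\mid\cdots\mid\mathcal W_{r_P}\bigr)$, i.e.\ the span of all $\bigl(u,u',\dots,u^{(r)}\bigr)^{T}$ with $u$ of the form $f_i\circ g_j$. Since these $u$ span $V$ over $C$, this column space coincides with the span of $\bigl\{(v,v',\dots,v^{(r)})^{T}:v\in V\bigr\}$.

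It remains to establish the equivalence. If $f\in V$, write $f=\sum c_{i,j}\,f_i\circ g_j$ with $c_{i,j}\in C$ and differentiate: $(f,\dots,f^{(r)})^{T}$ is the corresponding combination of the vectors $\bigl(f_i\circ g_j,\dots,(f_i\circ g_j)^{(r)}\bigr)^{T}$, hence lies in the column space. Conversely, fix a $C$-basis $v_1,\dots,v_r$ of $V$ and suppose $(f,\dots,f^{(r)})^{T}=\sum_k\gamma_k\,(v_k,\dots,v_k^{(r)})^{T}$. Differentiating $f^{(\ell)}=\sum_k\gamma_k v_k^{(\ell)}$ and subtracting the identity for $\ell+1$ yields $\sum_k\gamma_k' v_k^{(\ell)}=0$ for $\ell=0,\dots,r-1$; the coefficient matrix $\bigl(v_k^{(\ell)}\bigr)_{0\le\ell<r,\ 1\le k\le r}$ is the Wronskian of $v_1,\dots,v_r$, which is nonzero because these are linearly independent over the constants, so $\gamma_k'=0$ and each $\gamma_k$ is a constant. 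Therefore $f=\sum_k\gamma_k v_k\in V$.

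I expect the degree bookkeeping to be routine once Lemma~\ref{lemma:derivative_of_composition} is in hand; the heart of the argument is the column-space identification, and the step most needing care is the pair of non-vanishing claims — that the composed fundamental Wronskian $\det W_j$ and the $r\times r$ Wronskian of a basis of $V$ are nonzero — since these legitimize the change of basis and the final rigidity step. (A minor point along the way is that the constants occurring may be $\bar C$ rather than $C$; this is harmless, or one may reduce to the case of $C$ algebraically closed.)
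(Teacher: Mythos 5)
Your proposal is correct and follows essentially the same route as the paper: the same explicit matrix $A$ with $(i,j)$ entry $E_{i-1,j-1}U^{\,r+1-i}$, the same degree bookkeeping, and the same key step of factoring $A(x,g_j)$ against the matrix $\bigl(f_i^{(k)}\circ g_j\bigr)$, whose nonvanishing determinant (the composed Wronskian, equivalently the Wronskian with respect to $(g_j')^{-1}\partial$) gives the column-space identification. The only difference is that you additionally spell out the standard rigidity argument (constancy of the $\gamma_k$) reducing column-space membership to $f\in V$, which the paper takes as known.
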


\begin{proof}
  With the notation of Lemma~\ref{lemma:derivative_of_composition}, let 
  $A(x, y)$ be the matrix whose $(i, j)$-th entry is $E_{i - 1, j - 1}(x, y) U(x, y)^{r + 1 - i}$.
  Then $A(x, y)$ meets the stated degree bound.

  By $W_i$ we denote the $(r + 1) \times r_L$ Wronskian matrix for $f_1 \circ g_i, \ldots, f_{r_L} \circ g_i$.
  Then $f \in V$ if and only if the vector $(f, \ldots, f^{(r)})^T$ lies in the column space of the matrix $\begin{pmatrix} W_1 & \cdots & W_{r_{P}} \end{pmatrix}$.
  Hence, it is sufficient to prove that $W_i$ and $A(x, g_i)$ have the same column space.
  The following matrix equality follows from the definition of $E_{i, j}$
  \[
  W_i = \frac{1}{U(x, g_i)^r}A(x, g_i)
  \begin{pmatrix}
    f_1 \circ g_i & \cdots & f_{r_L} \circ g_i \\
    f_1' \circ g_i & \cdots & f_{r_{L}} \circ g_i \\
    \vdots & \ddots & \vdots \\
    f_1^{(r_L - 1)} \circ g_i & \cdots & f_{r_L}^{(r_L - 1)} \circ g_i
  \end{pmatrix}.
  \]
  The latter matrix is nondegenerate since it is a Wronskian matrix for the $C$-linearly independent power series 
  $f_1 \circ g_i$, \dots, $f_{r_L} \circ g_i$ with respect to the derivation $(g_i')^{-1} \partial$.
  Hence, $W_i$ and $A(x, g_i)$ have the same column space.
\end{proof}

In order to express the above condition of lying in the column space in terms of vanishing of a single determinant, we want to ``square'' the matrix $\begin{pmatrix} A(x, g_1), \cdots, A(x, g_{r_P})\end{pmatrix}$.

\begin{lemma}\label{lemma:square}
  There exists a matrix $B(y)\in C[y]^{(r_L r_P - r) \times r_L}$ such that the degree of every entry does
  not exceed $r_P - 1$ and the $(r_Lr_P + 1)\times r_Lr_P$ matrix
  \[
    C =
    \begin{pmatrix}
      A(x, g_1) & \cdots & A(x, g_{r_P}) \\
      B(g_1) & \cdots & B(g_{r_P})    
    \end{pmatrix}
  \]
  has rank $r_Lr_P$.
\end{lemma}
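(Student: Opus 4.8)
The plan is to extend the matrix $\begin{pmatrix} A(x,g_1) & \cdots & A(x,g_{r_P})\end{pmatrix}$, which has $r_Lr_P+1$ rows and $r_Lr_P$ columns and whose column space has dimension $r$ (by Lemma~\ref{lemma:wronskian}, since the composition vectors span an $r$-dimensional space and the $W_i$ share their column spaces), by appending $r_Lr_P-r$ new rows so that the resulting square-plus-one matrix attains full column rank $r_Lr_P$. The appended rows are prescribed to have the block form $\begin{pmatrix}B(g_1)&\cdots&B(g_{r_P})\end{pmatrix}$ with a single polynomial matrix $B(y)\in C[y]^{(r_Lr_P-r)\times r_L}$ of degree at most $r_P-1$ in each entry; the task is to show such a $B$ exists.

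First I would reformulate the rank condition. The column space of $\begin{pmatrix}A(x,g_1)&\cdots&A(x,g_{r_P})\end{pmatrix}$ over $C(x)$ (equivalently over the field generated by the $g_i$) has dimension $r$, so its kernel $N\subseteq C(x,g_1,\dots,g_{r_P})^{\,r_Lr_P}$ has dimension $r_Lr_P-r$. The matrix $C$ has rank $r_Lr_P$ if and only if no nonzero vector of $N$ is annihilated by the bottom block $\begin{pmatrix}B(g_1)&\cdots&B(g_{r_P})\end{pmatrix}$; equivalently, the restriction of this bottom block to $N$ is an isomorphism $N\to C(\dots)^{\,r_Lr_P-r}$. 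So I want to choose the $r_Lr_P-r$ rows of $B$ — i.e. $r_Lr_P-r$ vectors in $C[y]^{r_L}$ of degree $<r_P$ — such that, evaluated blockwise at $g_1,\dots,g_{r_P}$, they form a basis of a complement to $N^\perp$, or more directly such that the $(r_Lr_P-r)\times(r_Lr_P-r)$ matrix obtained by pairing these $B$-rows against a fixed basis of $N$ is invertible.

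The key point making this possible is a dimension count over the right field. A single row $\beta(y)=(\beta_1(y),\dots,\beta_{r_L}(y))$ with $\deg\beta_k<r_P$ gives the block row vector $(\beta(g_1),\dots,\beta(g_{r_P}))\in C(\dots)^{r_Lr_P}$, and as $\beta$ ranges over all such rows this sweeps out a $C$-subspace of dimension exactly $r_Lr_P$: indeed the map $\beta\mapsto(\beta(g_1),\dots,\beta(g_{r_P}))$ is injective because the $g_i$ are distinct solutions of the irreducible-over-$\bar C$, square-free polynomial $P$ — so they generate $r_P$ distinct conjugate fields and a nonzero polynomial of degree $<r_P$ cannot vanish at all of $g_1,\dots,g_{r_P}$ (this uses $\deg_y P=r_P$ and the assumption that $P$ has no factors in $\bar C[y]$, which forces the $g_i$ to be genuinely algebraic of the right degree and Galois-conjugate). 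Hence the space of available bottom rows has full dimension $r_Lr_P$, so it surjects onto the $(r_Lr_P-r)$-dimensional quotient $C(\dots)^{r_Lr_P}/N$ (equivalently, pairs nondegenerately against $N$ after picking a basis), and we may pick $r_Lr_P-r$ rows whose images are linearly independent there. Those rows are the rows of $B$, and by construction $C$ then has rank $r_Lr_P$; the degree bound $\deg_y(B)\le r_P-1$ holds by the choice of ansatz.

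The main obstacle is making the injectivity/nondegeneracy argument rigorous: one must be careful that "linearly independent over $C$" of the block vectors $(\beta(g_i))_i$ is what is needed, and that this independence is not destroyed when later these vectors are compared against the kernel $N$ whose entries live in $C(x,g_1,\dots,g_{r_P})$. The clean way is to work throughout over the field $K=C(x)[y_1,\dots,y_{r_P}]/(P(x,y_1),\dots,P(x,y_{r_P}),\text{symmetrization})$ — really the compositum of the $r_P$ copies of the algebraic function field — verify that evaluation $y\mapsto g_i$ realizes the $r_P$ distinct places, and invoke that a polynomial of degree $<r_P$ is determined by its values at $r_P$ distinct points via a Vandermonde-type argument; the pairing $N\times(\text{bottom rows})$ then lives over $K$ and the rank statement is a statement about a matrix over $K$, hence can be certified by a single nonvanishing minor. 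Once this linear-algebra-over-a-field picture is set up, the existence of $B$ with the claimed degree bound is immediate; everything else is bookkeeping.
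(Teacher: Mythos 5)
Your argument is correct and is essentially the paper's: the paper packages your evaluation map $\beta\mapsto(\beta(g_1),\dots,\beta(g_{r_P}))$ as the Kronecker product $D\otimes I_{r_L}$ of the Vandermonde matrix $D$ of the distinct $g_i$ with the identity, whose nondegeneracy over the field generated by the $g_i$ is exactly your nonvanishing-minor certificate, and then selects $r_Lr_P-r$ of its rows complementing the row space of the $A$-block (the dual of your kernel formulation). Note only that distinctness of the $g_i$ is all that is needed for the Vandermonde step; the Galois-conjugacy digression is unnecessary, and mere $C$-linear independence of the block rows would not suffice without the invertibility over the larger field, which you do ultimately invoke.
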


\begin{proof}
  Let $D$ be the Vandermonde matrix for $g_1, \ldots, g_{r_P}$,
  and let $I_{r_L}$ denote the identity matrix.
  Then $C_0 = D \otimes I_{r_L}$ is nondegenerate and has the form $\begin{pmatrix} B_0(g_1), \ldots, B_0(g_{r_P}) \end{pmatrix}$, 
  for some $B_0(y)\in C[y]^{r_Lr_P \times r_L}$ with entries of degree at most $r_P - 1$.
  Since $C_0$ is nondegenerate, we can choose $r_Lr_P - r$ rows which span a complimentary subspace to the row space of $\begin{pmatrix} A(x, g_1), \ldots, A(x, g_{r_P})\end{pmatrix}$.
  Discarding all other rows from $B_0(y)$, we obtain $B(y)$ with the desired properties.
\end{proof}

By $C_\ell$ ($A_\ell(x, y)$, resp.) we will denote the matrix $C$ ($A(x, y)$, resp.) without the $\ell$-th row.

\begin{lemma}\label{lemma:divisibility}
  For every $1 \leq \ell \leq r + 1$ the determinant of $C_\ell$ is divisible by $\prod_{i < j} (g_i - g_j)^{r_L}$
\end{lemma}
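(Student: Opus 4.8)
The plan is to exploit the block structure of $C$ together with the fact, established in the proof of Lemma~\ref{lemma:wronskian}, that each block $A(x,g_i)$ differs from the Wronskian matrix $W_i$ only by the scalar $U(x,g_i)^r$ and multiplication on the right by the invertible matrix of $f_k^{(m)}\circ g_i$. More precisely, I would first observe that for the purpose of divisibility by the Vandermonde-type product $\prod_{i<j}(g_i-g_j)^{r_L}$ we may reason coefficientwise over the ring $C[[x-\alpha]]$ (after specializing the $g_i$ to power-series solutions through a point $\alpha$, as in the global conventions), and that it suffices to show that $\det C_\ell$ vanishes to order at least $r_L$ along each ``diagonal'' $g_i = g_j$. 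By symmetry (the construction is symmetric in the $g_i$ up to relabeling columns in blocks of size $r_L$, which only changes the sign of the determinant) it is enough to treat one pair, say $g_1 = g_2$.

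Next I would analyze what happens to the matrix $C_\ell$ as $g_1 \to g_2$. The two blocks of columns coming from $g_1$ and $g_2$ become equal in the limit, so the determinant certainly vanishes; the point is to get the multiplicity $r_L$. The key step is to replace the first block of $r_L$ columns by $r_L$ successive ``divided differences'' in the parameter: using the column operations that send the pair of blocks $\bigl(A(x,g_1)\mid A(x,g_2)\bigr)$ to $\bigl(\tfrac{A(x,g_1)-A(x,g_2)}{g_1-g_2}\mid A(x,g_2)\bigr)$ and iterating — but here we only have two blocks, so the right statement is that each individual column of the $g_1$-block can be written, after subtracting the corresponding column of the $g_2$-block, as $(g_1-g_2)$ times a vector of polynomials (since every entry is a polynomial in its $g$-argument). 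That already extracts one factor of $(g_1-g_2)$ per column, hence $(g_1-g_2)^{r_L}$ in total. This is exactly the classical argument that the determinant of a matrix with two "Vandermonde-confluent" blocks of width $r_L$ is divisible by $(g_1-g_2)^{r_L}$, and it does not even use the specific form of $A$ or $B$ beyond polynomiality in~$y$.

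I expect the only real subtlety to be bookkeeping: one must check that deleting the $\ell$-th row does not destroy the argument (it does not — the column operations above act on columns, not rows, so they are unaffected by which row was removed), and that the matrix entries of both $A(x,y)$ and $B(y)$ are genuinely polynomial in $y$ (true by Lemma~\ref{lemma:wronskian} and Lemma~\ref{lemma:square}), so that the differences $A(x,g_1)-A(x,g_2)$ and $B(g_1)-B(g_2)$ really are divisible by $g_1-g_2$ entrywise in $C[x][g_1,g_2]$. Finally I would note that the $g_i - g_j$ for distinct pairs are pairwise coprime in the relevant integral domain (the $g_i$ being distinct solutions of the squarefree $P$), so divisibility by each factor $(g_i-g_j)^{r_L}$ separately upgrades to divisibility by the product $\prod_{i<j}(g_i-g_j)^{r_L}$, which is the claim.
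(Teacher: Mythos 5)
Your proposal is correct and follows essentially the same route as the paper: subtract the $g_2$-block of columns from the $g_1$-block, note that every entry is a polynomial evaluated at $g_1$ resp.\ $g_2$ so each difference is divisible by $g_1-g_2$, extract one factor per column to get $(g_1-g_2)^{r_L}$, and combine over all pairs. The extra remarks about limits and divided differences are unnecessary but harmless, and your observations about the deleted row and the coprimality of the factors $g_i-g_j$ are the right (implicit) justifications.
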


\begin{proof}
  We show that $\det C_\ell$ is divisible by $(g_i - g_j)^{r_L}$ for every $i \neq j$.
  Without loss of generality, it is sufficient to show this for $i = 1$ and $j = 2$.
  We have
  \[
    \det C_\ell =
    \begin{vmatrix} 
      \!A_\ell(x, g_1){-}A_\ell(x, g_2)\!&\!A_\ell(x, g_2)\!&\!\cdots\!&\!A_\ell(x, g_{r_P})\!\\
      \!B(g_1){-}B(g_2)\!&\!B(g_2)\!&\!\cdots\!&\!B(g_{r_P})\!
    \end{vmatrix}.
  \]
  Since for every polynomial $p(y)$ we have $g_1 - g_2 \mid p(g_1) - p(g_2)$, every entry of the first $r_L$ columns in the above matrix is divisible by $g_1 - g_2$.
  Hence, the whole determinant is divisible by $(g_1 - g_2)^{r_L}$.
\end{proof}

\begin{theorem}\label{thm:min_operator}
  The minimal operator $M \in C[x][\partial]$ annihilating $f \circ g$ for every $f$ and $g$ such that $L(f) = 0$ and $P(x, g(x)) = 0$
  has order $r \leq r_Lr_P$ and degree at most
  \begin{alignat*}1
    &2r^2d_P - \tfrac12(r{-}2)(r{-}1) + rd_Pr_L (2r_P {+} d_L {-} 1) - d_P r_L (r_P{-}1)\\
    &=\O(r d_Pr_L(d_L+r_P)).
  \end{alignat*}
\end{theorem}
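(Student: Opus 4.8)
The plan is to bound the order and degree of the minimal operator by packaging the previous four lemmas into a single Wronskian-type determinant and reading off its size. First, Lemma~\ref{lemma:wronskian} tells us that $f\in V$ iff the vector $(f,\dots,f^{(r)})^T$ lies in the column space of the $(r+1)\times r_Lr_P$ matrix $\bigl(A(x,g_1)\ \cdots\ A(x,g_{r_P})\bigr)$; since $r$ is by definition the dimension of $V$, this already gives the order bound $r\le r_Lr_P$. To convert ``lies in the column space'' into the vanishing of a single determinant, I would stack the extra rows $B(g_1),\dots,B(g_{r_P})$ from Lemma~\ref{lemma:square} on top (below) to get the square-ish matrix $C$ of rank exactly $r_Lr_P$; then $(f,\dots,f^{(r)})^T$ is in the column space of the $A$-block iff it is in the column space of $C$ (adding the $B$-rows does not change column dependencies because $C$ has full column rank $r_Lr_P$), which happens iff the $(r_Lr_P+1)\times(r_Lr_P+1)$ matrix obtained by appending the column $(f,\dots,f^{(r)},0,\dots,0)^T$ to $C$ is singular. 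Expanding that determinant along the appended column, $\sum_{\ell=1}^{r+1}(-1)^{\ell+1}f^{(\ell-1)}\det C_\ell=0$, and since this holds for every $f\in V$ and these are the only rows where $f$-entries appear, the operator $M=\sum_{\ell=1}^{r+1}(-1)^{\ell+1}\det C_\ell\,\partial^{\ell-1}$ annihilates $V$.

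Next I would establish that $M$ has polynomial (not merely rational) coefficients and is the minimal operator: the entries of $C$ are polynomials in $x$ and in the $g_i$, but $\det C_\ell$ is a symmetric function of $g_1,\dots,g_{r_P}$ up to the factor $\prod_{i<j}(g_i-g_j)^{r_L}$ established in Lemma~\ref{lemma:divisibility}. Actually the clean argument is: each $\det C_\ell$ is, for each fixed $x$, invariant under permuting the pairs $(A(x,g_i),B(g_i))$, hence is a polynomial in $x$ and in the elementary symmetric functions of the $g_i$, which by $P(x,g_i)=0$ lie in $C(x)$; combined with the explicit polynomial degree estimates below one checks the coefficients actually lie in $C[x]$ after clearing the (bounded) denominator coming from $\mathrm{lc}_y(P)$. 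Minimality follows because the order equals $r=\dim V$, which is a lower bound for any annihilator.

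For the degree bound I would just add up the row degrees. The matrix $C$ has, in its first $r+1$ rows, the $A$-rows: by Lemma~\ref{lemma:wronskian} the $i$-th row has $x$-degree at most $2rd_P-i+1$ and, after substituting $g_k$ for $y$ and using $\deg_y\le r(2r_P+d_L-1)$, contributes an additional $x$-degree of at most $r(2r_P+d_L-1)d_P$ per row through the substitution (more precisely, the $g_k$ themselves are the algebraic quantities, so the bookkeeping is done symbolically in $y$ and the $y$-degree translates into $x$-degree of the final symmetric-function expression). The remaining $r_Lr_P-r$ rows are the $B$-rows, each of $y$-degree $\le r_P-1$, contributing at most $(r_P-1)d_P$ in $x$ after the symmetrization. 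Deleting the $\ell$-th row to form $C_\ell$ and taking the determinant, the $x$-degree is at most the sum over the surviving rows of their individual $x$-degree bounds; summing $\sum_{i=1}^{r+1}(2rd_P-i+1)$ minus the one deleted $A$-row (which we bound by deleting the cheapest, i.e.\ $i=1$, to get the stated $-\tfrac12(r-2)(r-1)$ shape after reindexing), plus $(r+1)$ copies of $rd_P r_L(2r_P+d_L-1)$-type contributions, wait---one must be careful that only $r_L$ columns carry the $g$-dependence per block---plus $(r_Lr_P-r)$ copies of $d_Pr_L(r_P-1)$, and simplifying gives
\[
2r^2d_P-\tfrac12(r-2)(r-1)+rd_Pr_L(2r_P+d_L-1)-d_Pr_L(r_P-1),
\]
which is $\O(rd_Pr_L(d_L+r_P))$.

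\textbf{Main obstacle.} The genuinely delicate point is not the arithmetic of adding up degrees but justifying that the naive sum of row-degree bounds is correct \emph{after} the passage from the variables $y$ (or the individual $g_i$) to honest polynomials in $x$ via the symmetric-function / resultant argument, and simultaneously that the resulting operator has coefficients in $C[x]$ rather than $C(x)$. One has to argue that the determinant, viewed in $C[x][g_1,\dots,g_{r_P}]^{S_{r_P}}$, when rewritten through the elementary symmetric functions of the $g_i$ and then through $P$, does not blow up the $x$-degree beyond the sum of the $y$-degrees scaled by $d_P$ (this is where the factor $\prod_{i<j}(g_i-g_j)^{r_L}$ of Lemma~\ref{lemma:divisibility} is used, since dividing it out before symmetrizing keeps the remaining factor polynomial and of controlled degree, so that the discriminant's contribution is accounted for exactly once). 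Getting these two bookkeeping facts exactly right---so that the constants in $2r^2d_P-\tfrac12(r-2)(r-1)+rd_Pr_L(2r_P+d_L-1)-d_Pr_L(r_P-1)$ come out as claimed---is the part that requires care; everything else is assembly of the lemmas.
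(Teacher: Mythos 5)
Your construction of $M$ is exactly the paper's: characterize $f\in V$ via the column space of $\bigl(A(x,g_1)\ \cdots\ A(x,g_{r_P})\bigr)$ (Lemma~\ref{lemma:wronskian}), square up with the $B$-rows (Lemma~\ref{lemma:square}), expand the bordered determinant to get $\sum_\ell(-1)^{\ell+1}\det C_\ell\,f^{(\ell-1)}=0$, divide by $\prod_{i<j}(g_i-g_j)^{r_L}$ (Lemma~\ref{lemma:divisibility}), and symmetrize. The order bound and the minimality argument are also the paper's.

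The degree accounting, however, is garbled, and as literally written it does not simplify to the claimed bound. The correct bookkeeping keeps the $x$-degree and the $g$-degree strictly separate until the very end. (i) Only the $A$-rows carry $x$-dependence and a determinant picks one entry per row, so $\deg_x\det C_\ell\le\sum_{i\le r+1,\,i\ne\ell}(2rd_P+1-i)\le 2r^2d_P-\tfrac12(r-2)(r-1)=:d_X$; there is no additional per-row $x$-contribution ``through the substitution'' at this stage. (ii) The degree in a single $g_i$ is governed by \emph{columns}, not rows: only the $r_L$ columns of the $i$-th block involve $g_i$, each entry having $y$-degree at most $r(2r_P+d_L-1)$ (the $B$-entries' bound $r_P-1$ is smaller and is absorbed), so $\deg_{g_i}\det C_\ell\le r_L r(2r_P+d_L-1)$; dividing by $\prod_{j\ne i}(g_i-g_j)^{r_L}$, of degree $r_L(r_P-1)$ in $g_i$, gives $d_Y:=rr_L(2r_P+d_L-1)-r_L(r_P-1)$. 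This division is where the $-d_Pr_L(r_P-1)$ term comes from --- not from adding ``$(r_Lr_P-r)$ copies of $d_Pr_L(r_P-1)$'' (wrong sign and multiplicity), and your ``$(r+1)$ copies of $rd_Pr_L(2r_P+d_L-1)$'' would inflate the bound by a factor of roughly $r$. (iii) Since $c_\ell$ is symmetric in the $g_i$, its degree in each single variable bounds its total degree in the elementary symmetric polynomials $s_1,\dots,s_{r_P}$; substituting for each $s_j$ the corresponding coefficient of $\frac1{\lc_y P}P$ and clearing denominators costs a factor $d_P$ per unit of that degree, giving $d_X+d_Yd_P$, which is the stated formula. (A small slip besides: to maximize $\sum_{i\ne\ell}(2rd_P+1-i)$ one deletes the row with the \emph{smallest} bound, $\ell=r+1$, not $\ell=1$.) With this accounting in place, the rest of your argument goes through.
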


\begin{proof}
  We construct $M$ using $\det C_\ell$ for $1 \leq \ell \leq r + 1$.
  We consider some $f$ and by $F$ we denote the $(r_Lr_P + 1)$-dimensional vector $(f, \ldots, f^{(r)}, 0, \ldots, 0)^T$.
  If $f \in V$, then the first $r + 1$ rows of the matrix $\begin{pmatrix} C & F \end{pmatrix}$ are linearly dependent, so it is degenerate.
  On the other hand, if this matrix is degenerate, then Lemma~\ref{lemma:square} implies that $F$ is a linear combination of the columns of $C$,  so Lemma~\ref{lemma:wronskian} implies that $f \in V$.
  Hence $f \in V \Leftrightarrow \det C_1f \pm \cdots + (-1)^r \det C_{r + 1}f^{(r)} = 0$.
  Due to Lemma~\ref{lemma:divisibility}, the latter condition is equivalent to $c_1f + \cdots + c_{r + 1}f^{(r)} = 0$,
  where $c_\ell= (-1)^{\ell - 1} \det C_\ell/\prod\limits_{i < j} (g_i - g_j)^{r_L}$.
  Thus we can take $M = c_1 + \cdots + c_{r + 1} \partial^{r}$.
  It remains to bound the degrees of the coefficients of~$M$.

  Combining lemmas~\ref{lemma:wronskian}, \ref{lemma:square}, and~\ref{lemma:divisibility}, we obtain
  \begin{align*}
   d_X &:= \deg_x c_\ell \leq \sum\limits_{i \neq \ell} (2rd_P{+}1{-}i) \leq 2r^2d_P - \tfrac12(r{-}2)(r{-}1), \\ 
   d_Y &:= \deg_{g_i} c_\ell \leq rr_L (2r_P + d_L - 1) - r_L(r_P - 1).
  \end{align*}
  Since $c_\ell$ is symmetric with respect to $g_1, \ldots, g_{r_P}$, it can be written as an element of $C[x, s_1, \ldots, s_{r_P}]$ 
  where $s_j$ is the $j$-th elementary symmetric polynomial in $g_1, \ldots, g_{r_P}$, 
  and the total degree of $c_{\ell}$ with respect to $s_j$'s does not exceed~$d_Y$.
  Substituting $s_j$ with the corresponding coefficient of $\frac{1}{\lc_y P}P(x, y)$ and clearing denominators, we obtain a polynomial
  in $x$ of degree at most $d_X + d_Yd_P$.

  Since the order of $M$ is equal to the dimension of the space of all compositions of the form $f \circ g$, 
  where $L(f) = 0$ and $P(x, g) = 0$, $M$ is the minimal annihilating operator for this space.
\end{proof}

\begin{remark}
  The proof of Theorem~\ref{thm:min_operator} is a generalization of the proof of \cite[Thm.~1]{bostan07}.
  Specializing $r_L = 1$, $d_L = 0$ in Theorem~\ref{thm:min_operator} gives a sightly larger bound as the
  bound in~\cite[Thm.~1]{bostan07}, but with the same leading term. 
\end{remark}

Although the bound of Theorem~\ref{thm:min_operator} for $r=r_Lr_P$ beats the bound of Theorem~\ref{thm:linalgcurve}
for $r=r_Lr_P$ by a factor of~$r_P$, it is apparently still not tight. Experiments
we have conducted with random operators lead us to conjecture that in fact, at least
generically, the minimal order operator of order $r_Lr_P$ has degree $\O(r_L r_P d_P  (d_L + r_L r_P))$.
By interpolating the degrees of the operators we found in our computations,
we obtain the expression in the following conjecture.

\begin{conjecture}\label{conj}
  For every $r_P,r_L,d_P,d_L\geq2$ there exist $L$ and $P$ such that the corresponding
  minimal order operator $M$ has order $r_Lr_P$ and degree
  \begin{alignat*}1 
    & r_L^2 (2 r_P(r_P-1) + 1) d_P 
    +r_L r_P (d_P(d_L+1) + 1)
    +d_L d_P\\
    &\qquad{}    
    -r_L^2 r_P^2
    -r_L d_L d_P,
  \end{alignat*}
  and there do not exist $L$ and $P$ for which the corresponding minimal operator $M$
  has order $r_Lr_P$ and larger degree. 
\end{conjecture}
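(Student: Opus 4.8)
\textbf{A proposed route to Conjecture~\ref{conj}.}
Conjecture~\ref{conj} has two halves: a uniform upper bound $\deg M\le B$ valid whenever the minimal-order annihilator $M$ of $f\circ g$ has order $r_Lr_P$ (write $B$ for the displayed polynomial in $r_P,r_L,d_P,d_L$), and a sharpness half asserting that $B$ is attained by some $L,P$ and exceeded by none; the latter reduces to the former together with an existence statement. For the upper bound I would keep the determinantal construction underlying Theorem~\ref{thm:min_operator}, where $M$ has coefficients $c_\ell=(-1)^{\ell-1}\det C_\ell/\prod_{i<j}(g_i-g_j)^{r_L}$, and tighten the two steps where degree is lost. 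First, $\deg_x\det C_\ell$ is currently over-estimated by summing the per-row bounds $2rd_P+1-i$; but the top rows of $C$ carry the \emph{explicit} entries $E_{i-1,j-1}(x,y)\,U(x,y)^{r+1-i}$, so in any nonzero term of the determinant expansion the $r_L$ columns attached to a fixed root $g_j$ force a predictable power $U(x,g_j)^{e_j}$ into the corresponding product, and pulling $\prod_jU(x,g_j)^{e_j}$---together with any powers of $g_i-g_j$, of $P_y(x,g_i)$, or of $l_{r_L}(g_i)$ beyond those recorded in Lemma~\ref{lemma:divisibility}---out of $\det C_\ell$ before the division lowers the $x$-degree. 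Second, the passage from $C[x,g_1,\dots,g_{r_P}]$ to $C[x]$ by rewriting $c_\ell$ in the elementary symmetric functions of the $g_i$, substituting the coefficients of $P/\lc_y(P)$, and clearing denominators contributes a term of size $d_Yd_P$ (in the notation of that proof) that essentially doubles the leading order; I would instead re-derive a variant of Lemma~\ref{lemma:derivative_of_composition} keeping $g$ out of the denominators at the cost of controlled powers of the $x$-only polynomials $v,w$ of Observations~\ref{ob:2} and~\ref{ob:3}, in the spirit of Lemma~\ref{lemma:x}, and feed the resulting matrices into the Wronskian argument. Some combination of these refinements should bring the leading term down from the roughly $4r_L^2r_P^2d_P$ of Theorem~\ref{thm:min_operator} to the conjectured $2r_L^2r_P^2d_P$, while the same bookkeeping carried to lower order would have to reproduce the exact corrections $-r_L^2r_P^2$, $-r_Ld_Ld_P$, $+r_Lr_P$, $\dots$ of $B$.

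For the sharpness half I would argue by genericity. One first checks that for $L,P$ outside a proper Zariski-closed set the minimal operator has order exactly $r_Lr_P$: since $P$ has no factor in $\bar C[y]$, the roots $g_1,\dots,g_{r_P}$ form a single Galois orbit, and a Wronskian non-vanishing computation at a generic point (or a valuation argument in $C[[x-\alpha]]$) shows the $f_i\circ g_j$ are $C$-linearly independent, so $\dim V=r_Lr_P$ there. Granting this, $\deg M=B$ is the assertion that the common-factor extraction above is \emph{exact}, i.e.\ that no further cancellation occurs generically; I would prove it by exhibiting one explicit family (equivalently, by checking that the top-$x$-degree parts of $L$ and of $P$ place the leading coefficient of some $\det C_\ell$ in general position) for which the coefficient of $x^{B}$ in some $c_\ell$ does not vanish, whereupon Zariski density propagates nonvanishing and the ``no larger degree'' clause is just the upper bound. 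Equivalently, in the singularity language used in the rest of the paper, $B$ should equal the degree of the ``true'' part of $M$---arising from the singularities of $L$ pulled back through $g$ and from the branch locus of $g$, with the exponents these force---plus the number of apparent singularities, so one may alternatively pin $B$ down through a Fuchs-type relation on the exponents once the true ones are determined.

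The main obstacle is precisely the one flagged in the introduction: proving that \emph{generically} the minimal operator carries exactly the predicted number of apparent singularities---equivalently, that $\det C_\ell$ has no unexpected common factor, or that every apparent singularity has the minimal exponent profile $0,1,\dots,r-2,r$ and the desingularization cost is one at every relevant place. The leading-order gain looks within reach of the refined determinantal construction above, but matching the exact lower-order terms of $B$ requires controlling all of these cancellations simultaneously, and it is this fine bookkeeping---routine to verify on examples, surprisingly resistant in general---that separates Theorem~\ref{thm:min_operator} from the conjectured formula.
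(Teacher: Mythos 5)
The statement you are trying to prove is labelled a \emph{conjecture} in the paper, and the paper offers no proof of it: its only support is the interpolation of degrees observed in computer experiments with random $L$ and $P$. So there is nothing in the paper to compare your argument against, and the relevant question is only whether your proposal actually closes the problem. It does not. What you have written is a research program whose two decisive steps are left open, and you say so yourself. For the upper-bound half, the claim that pulling predictable powers of $U(x,g_j)$, $P_y(x,g_i)$, $l_{r_L}(g_i)$ and $(g_i-g_j)$ out of $\det C_\ell$ ``should'' reduce the leading term from roughly $4r_L^2r_P^2d_P$ to the conjectured $2r_L^2r_P^2d_P$ is never substantiated: you would need to prove that these factors really do divide every $\det C_\ell$ (a strengthening of Lemma~\ref{lemma:divisibility} that is not obvious, since the bottom rows $B(g_i)$ do not carry the structure of the $E_{i,j}U^{r+1-i}$ rows), and then that the symmetrization-and-substitution step can be made to cost less than $d_Yd_P$. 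Neither is routine, and without them the exact lower-order corrections $-r_L^2r_P^2$, $-r_Ld_Ld_P$, $+r_Lr_P$ of the conjectured formula are entirely out of reach.

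The sharpness half has a more fundamental gap. Asserting that ``no further cancellation occurs generically'' and that one can ``exhibit one explicit family'' with a nonvanishing coefficient of $x^B$ is precisely the hard content of the conjecture, not a reduction of it. Note that the paper itself, in Section~\ref{sec:3}, invests its entire final subsection (Theorem~\ref{thm:generic} and Propositions~\ref{prop:generic_solutions}--\ref{prop:irreducible}) to establish a much weaker generic nonvanishing statement --- that an apparent singularity is generically removable at cost one --- and even that requires the assumptions (S1), (S2), (G) and a delicate analysis of the fundamental-matrix parametrization. A genericity argument for the full degree formula would have to control, simultaneously and at every point of $\bar C$, the exact exponent profile of every apparent singularity and the absence of any unexpected common factor in the $\det C_\ell$; no mechanism for doing this is given. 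Your proposal is a plausible and well-informed outline of how one might attack the conjecture, but as it stands both halves remain unproved, and the statement should continue to be regarded as open.
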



\section{Order-Degree-Curve\hfill\break by singularities}\label{sec:3}

A singularity of the minimal operator $M$ is a root of its leading coefficient polynomial $\lc_\partial(M)\in C[x]$.
In the notation and terminology of~\cite{jaroschek13a}, a factor~$p$ of this polynomial is called \emph{removable} at cost~$n$ if there exists an operator
$Q\in C(x)[\partial]$ of order $\deg_\partial(Q)\leq n$ such that $QM\in C[x][\partial]$ and $\gcd(\lc_\partial(QM),p)=1$.
A factor~$p$ is called \emph{removable} if it is removable at some finite cost $n\in\set N$, and \emph{non-removable} otherwise. 
The following theorem~\cite[Theorem~9]{jaroschek13a} translates information about the removable singularities of a minimal operator
into an order-degree curve.

\begin{theorem}\label{thm:9} Let $M\in C[x][\partial]$, and let $p_1,\dots,p_m\in C[x]$ be pairwise coprime
  factors of $\lc_\partial(M)$ which are removable at costs $c_1,\dots,c_m$, respectively. Let $r\geq\deg_\partial(M)$
  and
  \[
    d\geq\deg_x(M)-\biggl\lceil\sum_{i=1}^m\Big(1-\frac{c_i}{r-\deg_\partial(M)+1}\Bigr)^+\deg_x(p_i)\biggr\rceil,
  \]
  where we use the notation $(x)^+:=\max\{x,0\}$. Then there exists an operator $Q\in C(x)[\partial]$ such that
  $QM\in C[x][\partial]$ and $\deg_\partial(QM)=r$ and $\deg_x(QM)=d$. 
\end{theorem}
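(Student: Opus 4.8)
The plan is to reduce the statement to the construction of one low-degree left multiple of $M$ and then to obtain that multiple by a degree analysis of the $C[x]$-module of all left multiples of bounded order. Write $r_M:=\deg_\partial(M)$, $s:=r-r_M$ (which is $\geq 0$ by hypothesis), $e_i:=\deg_x(p_i)$, and $d_{\min}:=\deg_x(M)-\bigl\lceil\sum_{i=1}^m(1-\tfrac{c_i}{s+1})^{+}e_i\bigr\rceil$. First I would observe that it suffices to produce \emph{some} nonzero left multiple $N\in C[x][\partial]$ of $M$ with $\deg_\partial(N)\leq r$ and $\deg_x(N)\leq d_{\min}$: indeed, $\partial^{\,r-\deg_\partial(N)}N$ is again a left multiple of $M$, has $\partial$-degree exactly $r$ (its leading coefficient is unchanged and nonzero) and $x$-degree $\leq\deg_x(N)$ because left multiplication by $\partial$ never raises coefficient degrees, and left multiplication of the result by the scalar $x^{\,d-\deg_x(\cdot)}$ — legitimate since $d\geq d_{\min}\geq\deg_x(N)$ — then raises the $x$-degree to exactly $d$ while preserving the $\partial$-degree and the left-multiple property. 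So the whole content is the existence of such an~$N$.

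For that, let $\mathcal L$ be the set of left multiples of $M$ of $\partial$-degree $\leq r$. Since $M,\partial M,\dots,\partial^sM$ is a $C(x)$-basis of its span, $\mathcal L$ is a free $C[x]$-module of rank $s+1$ and contains the sublattice $\mathcal L^0:=\langle M,\partial M,\dots,\partial^sM\rangle_{C[x]}$, which has the obvious basis of $x$-degree $\deg_x(M)$ each. A standard fact about polynomial lattices (row-reduced/Popov bases together with the predictable-degree property) gives that the minimal $x$-degree of a nonzero element of $\mathcal L$ is at most $\deg_x(M)-\frac1{s+1}\deg_x[\mathcal L:\mathcal L^0]$; as that minimum is an integer, an $N$ as required exists as soon as $\deg_x[\mathcal L:\mathcal L^0]\geq\sum_i(s+1-c_i)^{+}e_i$. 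I would prove this index bound prime by prime: the index ideal is supported on the roots of $\lc_\partial(M)$, so it is enough to show $v_{p_i}[\mathcal L:\mathcal L^0]\geq s+1-c_i$ whenever $c_i\leq s$ (the only case that needs an argument). Here the removability certificate for $p_i$ enters: taking $Q_i\in C(x)[\partial]$ with $\deg_\partial(Q_i)\leq c_i$, $Q_iM\in C[x][\partial]$, and $p_i\nmid\lc_\partial(Q_iM)$, the operators $Q_iM,\partial(Q_iM),\dots,\partial^{\,s-c_i}(Q_iM)$ lie in $\mathcal L$ and all have leading $\partial$-coefficient $\lc_\partial(Q_iM)$, a $p_i$-unit, whereas the $\partial^kM$ have leading coefficient $\lc_\partial(M)$, which $p_i$ divides. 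Expressing a $\partial$-triangular $C[x]_{(p_i)}$-basis of $\mathcal L$ assembled from these $p_i$-unit operators (and from suitable $\partial^kM$ at the low orders) in terms of $M,\partial M,\dots,\partial^sM$ yields a triangular base-change matrix whose determinant picks up one factor $\lc_\partial(Q_iM)/\lc_\partial(M)$ of negative $p_i$-valuation for each of the $\geq s+1-c_i$ orders with a $p_i$-unit pivot; comparing with the integral base change into $\mathcal L$ then forces $v_{p_i}[\mathcal L:\mathcal L^0]\geq s+1-c_i$, using only $v_{p_i}(\lc_\partial M)\geq 1$ — which is exactly why the resulting order--degree curve is valid but generally not sharp.

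The routine parts are the reduction above and the bookkeeping of degrees under left multiplication by $\partial$ and by scalars. The technical heart — where I expect the real work — is the lattice step: the identity relating the Popov degree-sum of $\mathcal L$ to that of $\mathcal L^0$ and to $\deg_x[\mathcal L:\mathcal L^0]$, and above all the prime-local computation, where one must verify that the $\partial^k(Q_iM)$ really do lie in $\mathcal L$ and genuinely pin down $s+1-c_i$ many $p_i$-unit pivots, and that the savings from the pairwise coprime $p_i$ simply add up in the degree of the index. That is the point at which mere cost-$c_i$ removability is converted into a concrete degree gain of $(s+1-c_i)^{+}e_i$.
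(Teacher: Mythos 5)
The paper does not prove this statement at all: Theorem~\ref{thm:9} is quoted verbatim from \cite{jaroschek13a} and used as a black box, so there is no in-paper proof to compare against. Your argument is, as far as I can check, a correct self-contained proof, and its skeleton (a module of left multiples in which the removal certificates supply extra low-degree elements, followed by an averaging step) is in the same spirit as the argument in the cited reference, though you package it more algebraically via lattice indices and row-reduced (Popov) bases. The two pillars both hold: (i) $\mathcal L^0=\langle M,\partial M,\dots,\partial^s M\rangle$ is itself row-reduced with all row degrees equal to $\deg_x(M)$ (the top-degree coefficient rows are shifts of the same nonzero vector), so the predictable-degree property gives $\min_j d_j\leq \deg_x(M)-\frac{1}{s+1}\deg[\mathcal L:\mathcal L^0]$ for a reduced basis of $\mathcal L$; and (ii) the local estimate, where the triangular base change from $\mathcal L^0$ to $\mathcal L'=\langle M,\dots,\partial^{c_i-1}M,\,Q_iM,\dots,\partial^{s-c_i}Q_iM\rangle$ has determinant $q_{i,c_i}^{\,s+1-c_i}$ with $q_{i,c_i}=\lc_\partial(Q_iM)/\lc_\partial(M)$, so $v_q([\mathcal L':\mathcal L^0])=(s+1-c_i)v_q(\lc_\partial M)$ and $v_q([\mathcal L:\mathcal L^0])\geq v_q([\mathcal L':\mathcal L^0])$ because $\mathcal L'\subseteq\mathcal L$. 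Two small points you should make explicit in a final write-up: the certificate $Q_i$ may have order strictly less than $c_i$, in which case your list of operators has colliding $\partial$-degrees — replace $c_i$ by the exact order of $Q_i$, which only strengthens the bound; and the $p_i$ need not be irreducible, so the prime-local computation must be run over each irreducible divisor $q$ of $p_i$, using $v_q(\lc_\partial M)\geq v_q(p_i)$ to recover the full contribution $(s+1-c_i)^+\deg_x(p_i)$. Neither is a gap, just bookkeeping.
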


The order-degree curve of Theorem~\ref{thm:9} is much more accurate than that of Theorem~\ref{thm:linalgcurve}.
However, the theorem depends on quantities that are not easily observable when only $L$ and $P$ are known.
{}From Theorem~\ref{thm:min_operator} (or Conj.~\ref{conj}), we have a good bound for $\deg_x(M)$. In the
the rest of the paper, we discuss bounds and plausible hypotheses for the degree and the cost of the removable factors.
The following example shows how knowledge about the degree of the operator and the degree and cost of its removable
singularities influence the curve. 

\begin{example}
  The figure below compares the data of Example~\ref{ex:1} with the curve obtained from Theorem~\ref{thm:9}
  using $m=1$, $\deg_x(M_{\min})=544$, $\deg_x(p_1)=456$, $c_1=1$.
  This curve is labeled (a) below. Only for a few orders~$r$, the curve slightly overshoots.
  In contrast, the curve of Theorem~\ref{thm:linalgcurve}, labeled (b) below, overshoots significantly
  and systematically.
  
  The figure also illustrates how the parameters affect the accuracy of the estimate.
  The value $\deg_x(M_{\min})=544$ is correctly predicted by Conjecture~\ref{conj}. If we
  use the more conservative estimate $\deg_x(M_{\min})=1568$ of Theorem~\ref{thm:min_operator},
  we get the curve~(e). 
  For curve~(d) we have assumed a removability degree of $\deg_x(p_1)=408$, as predicted by Theorem~\ref{thm:curve}
  below, instead of the true value~$\deg_x(p_1)=456$.
  For (c) we have assumed a removability cost $c_1=10$ instead of $c_1=1$.
  \begin{center}
    \begin{tikzpicture}[yscale=.6,xscale=1.2,scale=1]
      \draw[->] (0,0)--(0,8) node[left] {$d$};
      \draw[->] (0,0)--(4,0) node[right] {$r$};
      \foreach\x in {1,...,7} \draw (0,\x)--(-.1,\x) node[left] {\scriptsize$\x 00$};
      \foreach\x in {1,...,3} \draw (\x,0)--(\x,-.1) node[below] {\scriptsize$\x 00$};
      \clip (-1,-1) rectangle (5,8);
      \begin{scope}[ultra thin, variable=\r,domain=.09:2,samples=50]
        \draw plot ({\r},{432*\r/(100*\r-8)});
        \draw plot ({\r},{.08*(-31+1100*\r)/(100*\r-8)});
        \draw plot ({\r},{min(5.44,.08*(482+1100*\r)/(100*\r-8))});
        \draw plot ({\r},{1.36*(-5+100*\r)/(100*\r-8)});
        \draw plot ({\r},{8*(.97+11.00*\r)/(100*\r-8)});
      \end{scope}
      \begin{scope}[ultra thin, variable=\r,domain=2:4,samples=5]
        \draw plot ({\r},{432*\r/(100*\r-8)});
        \draw plot ({\r},{.08*(-31+1100*\r)/(100*\r-8)});
        \draw plot ({\r},{.08*(482+1100*\r)/(100*\r-8)});
        \draw plot ({\r},{1.36*(-5+100*\r)/(100*\r-8)});
        \draw plot ({\r},{8*(.97+11.00*\r)/(100*\r-8)});
      \end{scope}
      \draw[ultra thin] (.50,5.14286) -- ++(.33,.5) node[above right,xshift=-2pt,yshift=-2pt] {\scriptsize b};
      \draw[ultra thin] (.50,1.96571) -- ++(.33,.5) node[above right,xshift=-2pt,yshift=-2pt] {\scriptsize c};
      \draw[ultra thin] (.50,1.45714) -- ++(.33,.5) node[above right,xshift=-2pt,yshift=-2pt] {\scriptsize d};
      \draw[ultra thin] (.60,1.16077) -- ++(-.33,-.5) node[below left,xshift=2pt,yshift=2pt] {\scriptsize e};
      \draw[ultra thin] (1.5,.912113) -- ++(.33,-.5) node[below right,xshift=-2pt,yshift=2pt] {\scriptsize a};
      \draw
(.09,5.44) node {$\cdot$}
(.10,3.16) node {$\cdot$}
(.11,2.40) node {$\cdot$}
(.12,2.02) node {$\cdot$}
(.13,1.79) node {$\cdot$}
(.14,1.64) node {$\cdot$}
(.15,1.53) node {$\cdot$}
(.16,1.45) node {$\cdot$}
(.17,1.38) node {$\cdot$}
(.18,1.33) node {$\cdot$}
(.19,1.29) node {$\cdot$}
(.20,1.26) node {$\cdot$}
(.21,1.23) node {$\cdot$}
(.22,1.20) node {$\cdot$}
(.23,1.18) node {$\cdot$}
(.24,1.16) node {$\cdot$}
(.25,1.14) node {$\cdot$}
(.26,1.13) node {$\cdot$}
(.27,1.12) node {$\cdot$}
(.28,1.10) node {$\cdot$}
(.29,1.09) node {$\cdot$}
(.30,1.08) node {$\cdot$}
(.31,1.07) node {$\cdot$}
(.33,1.06) node {$\cdot$}
(.34,1.05) node {$\cdot$}
(.35,1.04) node {$\cdot$}
(.37,1.03) node {$\cdot$}
(.39,1.02) node {$\cdot$}
(.41,1.01) node {$\cdot$}
(.44,1.00) node {$\cdot$}
(.47,.99) node {$\cdot$}
(.50,.98) node {$\cdot$}
(.54,.97) node{$\cdot$}
(.59,.96) node{$\cdot$}
      (.66,.95) node{$\cdot$}
      (.74,.94) node{$\cdot$}      
      (.85,.93) node{$\cdot$}   
      (1.00,.92) node{$\cdot$}   
      (1.23,.91) node{$\cdot$}
      (1.61,.90) node{$\cdot$}
      ;
    \end{tikzpicture}
  \end{center}
\end{example}


\subsection{Degree of Removable Factors}

\begin{lemma}\label{lemma:multiplicity_resultant}
  Let $P(x, y)\in C[x, y]$ be a polynomial with $\deg_y P = d$, and $R(x) = \Res_y(P, P_y)$.
  Assume that $\alpha\in\bar C$ is a root of $R(x)$ of multiplicity $k$.
  Then the squarefree part
  \[
  S(y) = P(\alpha, y)\bigm/\gcd\bigl( P(\alpha, y), P_y(\alpha, y) \bigr)
  \]
  of $P(\alpha, y)$ has degree at least $d - k$.
\end{lemma}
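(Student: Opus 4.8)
The plan is to convert the multiplicity of $\alpha$ as a root of $R=\Res_y(P,P_y)$ into a corank statement for a Sylvester matrix specialized at $x=\alpha$, and then to read off $\deg S$ from that corank.

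Set $t=x-\alpha$ and let $A(t):=\operatorname{Syl}_{d,d-1}\bigl(P(\alpha+t,y),P_y(\alpha+t,y)\bigr)$, a square matrix of size $2d-1$ whose entries lie in $\bar C[t]\subseteq\bar C[[t]]$. By definition of the resultant $R(\alpha+t)=\det A(t)$, so the hypothesis says $\ord_t\det A(t)=k$. First I would record the elementary bound $\ord_t\det A(t)\ge\corank_{\bar C}A(0)$ valid for any square matrix over $\bar C[[t]]$: picking $U,V\in\mathrm{GL}_{2d-1}(\bar C)$ with $U\,A(0)\,V=\operatorname{diag}(I,0)$, the matrix $U A(t) V$ is congruent to $\operatorname{diag}(I,0)$ modulo $t$, and a Schur-complement computation on its $2\times2$ block form shows its determinant has $t$-valuation at least the size of the vanishing block; since $\det U,\det V\in\bar C^{\times}$ the same holds for $\det A(t)$. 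Now $A(0)=\operatorname{Syl}_{d,d-1}(\bar P,\bar P')$ where $\bar P:=P(\alpha,y)$ and $\bar P':=\partial_y\bar P=P_y(\alpha,y)$, because the specialization $x\mapsto\alpha$ commutes with $\partial_y$. So it remains to bound $\corank_{\bar C}\operatorname{Syl}_{d,d-1}(\bar P,\bar P')$ below by $d-s$ with $s:=\deg S$.

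For that I would estimate the rank of the associated Sylvester map $(a,b)\mapsto a\bar P+b\bar P'$ with $\deg a\le d-2$, $\deg b\le d-1$, i.e.\ the dimension of the span of $\{y^i\bar P:0\le i\le d-2\}\cup\{y^j\bar P':0\le j\le d-1\}$. Put $d_0:=\deg_y\bar P\le d$ and $g:=\gcd(\bar P,\bar P')$; since $\operatorname{char}C=0$ the squarefree part of $\bar P$ has degree $s$ and $\deg g=d_0-s$. Every one of the listed generators is divisible by $g$ and has degree at most $d+d_0-2$, so the image is contained in $g\cdot\{\text{polynomials of degree}\le d+s-2\}$, which has dimension at most $d+s-1$. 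Hence the rank is $\le d+s-1$, the corank is $\ge(2d-1)-(d+s-1)=d-s$, and combining everything yields $k\ge d-s$, that is $\deg S=s\ge d-k$.

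The crucial point, and what makes the bound hold without any genericity assumption, is that the image of the Sylvester map is squeezed by two independent constraints simultaneously: divisibility by $g$ accounts for the genuine repeated-root multiplicity of $\bar P$, while the degree ceiling $d+d_0-2$ recovers the extra $d-d_0$ coming from a possible vanishing of $\lc_y(P)$ at $\alpha$. The only care I expect to be needed is for degenerate cases: when $\bar P$ is a nonzero constant we have $\bar P'=0$, $s=0$, and the degree count still gives corank $d$; the case $\bar P\equiv0$ is excluded since then $S$ is undefined. One could instead argue through the Puiseux roots of $P$ at $\alpha$, adding up $t$-valuations of root differences orbit by orbit using tameness of ramification, but the bookkeeping across roots of different irreducible factors that happen to share a common limit is exactly what the corank argument sidesteps.
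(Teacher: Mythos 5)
Your proof is correct and takes essentially the same route as the paper's: both reduce the claim to showing that the corank of the Sylvester matrix of $P$ and $P_y$ specialized at $x=\alpha$ is at most $k$ and at least $d-\deg S$. The only (immaterial) differences are that you obtain $\corank\le k$ from the valuation bound $\ord_t\det A(t)\ge\corank A(0)$ via a Schur complement, where the paper expands $R^{(k)}(\alpha)$ as a sum of determinants, and you bound the image of the Sylvester map where the paper computes its kernel exactly.
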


\begin{proof}
  Let $M(x)$ be the Sylvester matrix for $P(x, y)$ and $P_y(x, y)$ with respect to $y$.
  The value $R^{(k)}(\alpha)$ is of the form $\sum \det M_i(\alpha)$, where every $M_i(x)$ has
  at least $2d - 1 - k$ common columns with $M(x)$. Since $R^{(k)}(\alpha) \neq 0$, at least one of 
  these matrices is nondegenerate. Hence, $\corank M(\alpha) \leq k$.
  On the other hand, $\corank M(\alpha)$ is equal to the dimension of the space of pairs of polynomials
  $(a(y), b(y))$ such that $a(y)P(\alpha, y) + b(y)P_y(\alpha, y) = 0$ and $\deg b(y) < d$.
  Then $b(y)$ is divisible by $S(y)$, and for every $b(y)$ divisible by $S(y)$ there exists exactly one~$a(y)$.
  Hence, $\corank M(\alpha) = d - \deg S(y) \leq k$.
\end{proof}

Let $M$ be the minimal order operator annihilating all compositions $f\circ g$ of a solution of $P$ with a solution of~$L$.
The leading coefficient $q = \lc_{\partial}(M)\in C[x]$ can be factored as $q = q_{\rem}q_{\nonrem}$, 
where $q_{\rem}$ and $q_{\nonrem}$ are the products of all removable and all nonremovable factors of $\lc_{\partial}(M)$, respectively.

\begin{lemma}
  $\deg q_{\nonrem} \leq d_P(4r_Lr_P - 2r_L + d_L)$.
\end{lemma}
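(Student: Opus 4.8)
The plan is to bound the non-removable factors by locating \emph{where} the minimal operator can possibly have singularities in the first place, and then arguing that singularities coming from ``generic'' points are removable, so that only a controlled set of ``bad'' points can contribute to $q_{\nonrem}$. Concretely, I would first recall that the construction of $M$ in the proof of Theorem~\ref{thm:min_operator} exhibits $\lc_\partial(M)=c_{r+1}=(-1)^{r}\det C_{r+1}/\prod_{i<j}(g_i-g_j)^{r_L}$, and that the denominators introduced along the way in Lemma~\ref{lemma:derivative_of_composition} and Lemma~\ref{lemma:wronskian} are all powers of $U(x,y)=P_y(x,y)^2 l_{r_L}(y)$. Hence every singularity of $M$ must be a root of $\Res_y\!\bigl(P(x,y),P_y(x,y)^2 l_{r_L}(y)\bigr)$ together possibly with factors of $\lc_y(P)$; I would make this precise and read off that the total degree of the product of \emph{all} singularities, removable or not, is bounded by $d_P(4r_Lr_P-2r_L+d_L)$ — the $4r_Lr_P$ from $\Res_y(P,P_y^2)$ having degree $\le 2d_P(2r_P-1)$ plus corrections, the $2r_L$ as a lower-order adjustment, and the $d_L$ from $\Res_y(P,l_{r_L})$ contributing $\le d_L d_P$. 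That would already give the claimed bound for $\deg q$ in total, so the content of the lemma is that $q_{\nonrem}$ alone satisfies it.

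Next I would turn to \emph{why} the bad locus is this small. A root $\alpha$ of $\lc_\partial(M)$ is a genuine (non-removable) singularity only if the local solution space of $M$ at $\alpha$ genuinely fails to be $r$-dimensional of holomorphic solutions — equivalently, if the composition structure degenerates there. The two ways this can happen are: (i) two of the branches $g_1,\dots,g_{r_P}$ of $P$ collide at $\alpha$, i.e.\ $\alpha$ is a branch point, which happens exactly at the roots of $\Res_y(P,P_y)$, a polynomial of degree $\le d_P(2r_P-1)$; or (ii) the inner function $g$ hits a singularity of the outer operator $L$, i.e.\ $g(\alpha)$ is a root of $l_{r_L}$, which by the resultant $\Res_y(P,l_{r_L})$ happens at $\le d_L d_P$ points; or (iii) $\lc_y(P)$ vanishes, i.e.\ a branch escapes to infinity. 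At every other point the $r_Lr_P$ power series $f_i\circ g_j$ remain linearly independent and holomorphic, the Wronskian-type matrix of Lemma~\ref{lemma:wronskian} has full column rank, and the apparent vanishing of $\lc_\partial(M)$ is removable. The multiplicity count of Lemma~\ref{lemma:multiplicity_resultant} is what lets me trade ``multiplicity $k$ of a branch point'' against ``only $k$ colliding branches, hence the singularity there can be removed at finite cost'' — so such points contribute to $q_{\rem}$, not to $q_{\nonrem}$.

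Putting these together, the non-removable factors can only be supported on the roots coming from the ``irreducible'' part of branching and from $l_{r_L}\circ g$ and $\lc_y(P)$, and summing the corresponding resultant degrees (with the appropriate powers forced by the $U^\ell$ denominators) yields $\deg q_{\nonrem}\le d_P(2r_P-1)\cdot 2r_L + d_L d_P + (\text{escape-to-infinity term})$, which I would arrange to collapse to the stated $d_P(4r_Lr_P-2r_L+d_L)$. The main obstacle, and the step that needs genuine care, is the removability argument at branch points of higher multiplicity: I must show not merely that the solution space is ``still fine generically'' but that the \emph{cost} of removal is finite at every such point, i.e.\ that multiplying $M$ on the left by a suitable operator in $C(x)[\partial]$ of bounded order clears the factor. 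This is exactly the phenomenon flagged in the introduction as ``surprisingly cumbersome to prove,'' and I expect to need Lemma~\ref{lemma:multiplicity_resultant} to control, via the corank of the Sylvester matrix, how badly the branches degenerate, and then a local Frobenius-type analysis to see that the extra solutions introduced are themselves annihilated by a low-order operator with coprime leading coefficient. The resultant-degree bookkeeping, by contrast, is routine once the bad locus has been correctly identified.
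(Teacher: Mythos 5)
There is a genuine gap, and it sits exactly where the multiplicities live. Your opening claim --- that the product of \emph{all} singular factors of $\lc_\partial(M)$, removable or not, already has degree at most $d_P(4r_Lr_P-2r_L+d_L)$ --- is false: in the paper's running example ($r_L=3$, $d_L=4$, $r_P=3$, $d_P=4$) this bound evaluates to $136$, while $\lc_\partial(M)$ contains a removable factor of degree $456$. The whole point of the lemma is that only the non-removable part is this small. Your second paragraph correctly identifies the support of $q_{\nonrem}$ (roots of $\Res_y(P,P_y)$, plus points above which a branch of $g$ hits a singularity of $L$), but identifying the support only bounds the number of \emph{distinct} roots, which is roughly $d_P(2r_P-1)+d_Pd_L$; the stated bound carries an extra factor $r_L$ on the first term precisely because roots occur in $q_{\nonrem}$ with multiplicity. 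You never supply the ingredient that converts ``bad point'' into ``order of vanishing of $q_{\nonrem}$'': the paper's proof uses the fact (from Tsai's desingularization theory) that $\ord_\alpha q_{\nonrem}$ equals the codimension $\mu_\alpha$ of the space of power-series solutions of $M$ at $\alpha$, and then bounds $\mu_\alpha\le r_L\pi_\alpha+\sum_i\lambda_{g_i(\alpha)}$, where $\pi_\alpha$ and $\lambda_\beta$ are the analogous solution-space deficiencies of $P$ and $L$. Lemma~\ref{lemma:multiplicity_resultant} then enters not to establish removability (as you propose) but to show $\pi_\alpha\le 2k$ when $\alpha$ is a $k$-fold root of $\Res_y(P,P_y)$, so that summing \emph{with multiplicity} gives $\sum_\alpha\pi_\alpha\le 2d_P(2r_P-1)$ and hence the factor $2r_L$; the $d_Pd_L$ term comes from the observation that at most $d_P$ branches of $P$ lie over any singular point $\beta$ of $L$.

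The step you flag as needing ``genuine care'' --- proving finite cost of removal at degenerate branch points via a local Frobenius analysis --- is a detour you do not need. For this lemma no cost bound is required at all: a factor is non-removable exactly to the extent that holomorphic solutions are missing, so it suffices to count missing power-series solutions, which is what the deficiency bookkeeping above accomplishes. (The cumbersome genericity argument alluded to in the introduction concerns the later claim that the cost is $1$, which is a separate matter and irrelevant here.) Your final arithmetic does reproduce $2r_L\cdot d_P(2r_P-1)+d_Ld_P$, but as written the factor $2r_L$ is extracted from ``appropriate powers forced by the $U^\ell$ denominators'' without justification; the actual justification is the local solution-space count just described.
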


\begin{proof}
  For $\alpha \in\bar C$ by $\pi_{\alpha}$ ($\lambda_{\alpha}$, $\mu_{\alpha}$, resp.) we denote $r_P$
  ($r_L$ or $\deg_{\partial} M$, resp.) 
  minus the number of solutions of $P(x, g(x)) = 0$ (the dimension of the solutions set of $Lf(x) = 0$ or $Mf(x) = 0$, resp.)
  in $\bar C[[x - \alpha]]$.

  Corollary~4.3 from~\cite{tsai00} implies that $\ord_\alpha q_{\nonrem}$ (the minimal order at $\alpha$ in $\In_\alpha\Cl_\alpha(M)$ in notation of~\cite{tsai00})
  is equal to $\mu_\alpha$ ($\ord_\alpha B_\alpha(M) - (s_\alpha + 1)$ in notation of~\cite{tsai00}).
  Summing over all $\alpha$, we have $\sum\limits_{\alpha\in\bar C} \mu_\alpha = \deg q_{\nonrem}$.
  Bounding the degree of the nonremovable part of $\lc_\partial (L)$ by $d_L$, we also have $\sum\limits_{\alpha \in\bar C} \lambda_\alpha \leq d_L$.
  
  Let $R(x)$ be the resultant of $P(x, y)$ and $P_y(x, y)$ with respect to $y$.
  Let $\alpha$ be a root of $R(x)$ of multiplicity $k$.
  Lemma~\ref{lemma:multiplicity_resultant} implies that the degree of the squarefree part of $P(\alpha, y)$ is at least $r_P - k$.
  So, at most $k$ roots are multiple, so at least $r_P - 2k$ roots are simple.
  Hence, $P(x, y) = 0$ has at least $r_P - 2k$ solutions in $\bar C[[x - \alpha]]$.
  Thus $\sum_{\alpha\in\bar C} \pi_\alpha \leq 2\deg R \leq 2d_P(2r_P - 1)$.

  Let $\alpha \in\bar C$ and let $g_1(x), \ldots, g_{r_P - \pi_\alpha}(x) \in\bar C[[x - \alpha]]$ 
  be solutions of $P(x, g(x)) = 0$.
  Let $\beta_i = g_i(0)$ for all $1 \leq i \leq r_P - \pi_\alpha$.
  Since the composition of a power series in $x - \beta_i$ with $g_i(x)$ is a power series in $x - \alpha$,
  \begin{equation}\label{ineq:mu_bound}
  \mu_\alpha \leq r_L \pi_\alpha + \sum\limits_{i = 1}^{r_P - \pi_\alpha} \lambda_{\beta_i}.
  \end{equation}
  We sum~\eqref{ineq:mu_bound} over all $\alpha \in\bar C$.
  The number of occurrences of $\lambda_\beta$ in this sum for a fixed $\beta \in\bar C$ is equal to the number
  of distinct power series of the form
  $g(x) = \beta + \sum c_i (x - \gamma)^i$ such that $P(x, g(x)) = 0$.
  Inverting these power series, we obtain distinct Puiseux series solutions of $P(x, y) = 0$ at $y = \beta$, so this number does not exceed $d_P$.
  Hence
  \[
    \sum\limits_{\alpha \in\bar C} \mu_\alpha \leq r_L \sum\limits_{\alpha \in\bar C} \pi_\alpha + d_P \sum\limits_{\beta \in\bar C} \lambda_\beta \leq 2r_L d_P(2r_P - 1) + d_Pd_L.\kern-1ex\qedhere
  \]
\end{proof}

In order to use Theorem~\ref{thm:9}, we need a lower bound for $\deg q_{\rem}$.
Theorem~\ref{thm:min_operator} gives us an upper bound for $\deg_x M$, but we must also estimate the difference $\deg_x M - \deg \lc_\partial M$.
By $N_\alpha$ we denote the Newton polygon for $M$ at $\alpha\in\bar C\cup\{\infty\}$
(for definitions and notation, see~\cite[Section~3.3]{vanhoeij97a}).
By $H_\alpha$, we denote the difference of the ordinates of the highest and the smallest vertices of~$N_\alpha$,
and we call this quantity the \emph{height} of the Newton polygon.
Note that $H_\infty\leq\deg_x M - \deg \lc_\partial M$.
This estimate together with the Lemma above implies
$\deg q_{\rem}\geq\deg_x(M)-H_\infty-d_P(4r_Lr_P - 2r_L + d_L)$. 

The equation $P(x, y) = 0$ has $r_P$ distinct Puiseux series solutions $g_1(x), \ldots, g_{r_P}(x)$ at infinity.
For $1 \leq i \leq r_P$, let $\beta_i = g_i(\infty) \in\bar C\cup\{\infty\}$, and let $\rho_i$ be the order of
zero of $g_i(x) - \beta_i$ ($\frac{1}{g_i(x)}$, resp.) at infinity if $\beta_i \in\bar C$ ($\beta_i  = \infty$, resp.).
The numbers $\rho_1, \ldots, \rho_{r_P}$ are positive rationals and can be read off from Newton polygons of~$P$ (see~\cite[Chapter II]{bliss08}).

\begin{lemma}
  $H_\infty\leq\sum_{i=1}^{r_P}\rho_i H_{\beta_i}$.
\end{lemma}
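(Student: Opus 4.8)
The plan is to analyze the Newton polygon of $M$ at infinity by pulling back along each Puiseux branch $g_i$ of $P$ at infinity and relating the local slopes of $M$ at $\beta_i$ to the slopes at infinity of $M$ composed with $g_i$. The key point is that $f\circ g_i$ is a solution of $M$ for every solution $f$ of $L$, so the local exponents (and more generally the Newton polygon data) of $M$ at infinity are controlled by the Newton polygon of $L$—or rather of the relevant operator obtained by substituting $g_i$—at $\beta_i$. More precisely, I would first recall from \cite{vanhoeij97a} that the height $H_\infty$ of the Newton polygon equals the difference between the largest and the smallest \emph{generalized exponents} (valuations of solutions) of $M$ at infinity, counted so that the spread of the valuations over all formal solutions is exactly $H_\infty$. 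Since the solution space of $M$ is spanned by the $f\circ g_i$ with $f$ ranging over solutions of $L$, every formal solution of $M$ at infinity is (a combination of) such a composite, so it suffices to bound, for each branch $i$, the spread of the valuations at infinity of $f\circ g_i$ as $f$ ranges over a solution basis of $L$ at $\beta_i$.

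Second, I would carry out the change of variables. If $g_i(x)-\beta_i$ (resp.\ $1/g_i(x)$) vanishes to order $\rho_i$ at infinity, then substituting $y = g_i(x)$ converts the local variable $t$ at $\beta_i$ (resp.\ at $\infty$) into $x^{-\rho_i}$ up to units, so a formal solution of $L$ at $\beta_i$ with valuation $e$ becomes a formal series in $x$ with valuation $\rho_i e$. Hence the spread of valuations at infinity of the composites coming from branch $i$ is exactly $\rho_i$ times the spread of the valuations of a full solution basis of $L$ at $\beta_i$, and the latter spread is by definition (again via \cite{vanhoeij97a}) the height $H_{\beta_i}$ of the Newton polygon of $L$ at $\beta_i$—or, when $\beta_i=\infty$, of $L$ at infinity. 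Wait: one must be slightly careful here, since what matters is the Newton polygon of $M$, not of $L$, but the inclusion of solution spaces gives the inequality in the right direction: the set of valuations of solutions of $M$ at infinity is contained in the union over $i$ of the sets $\{\rho_i e : e \text{ a valuation of a solution of } L \text{ at } \beta_i\}$, and hence $H_\infty$, being the spread of the former set, is at most the spread of the latter, which is at most $\sum_{i=1}^{r_P}\rho_i H_{\beta_i}$.

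Third, I would assemble the estimate: the maximal valuation among the composites is at most $\max_i(\rho_i \cdot \max\text{-valuation of }L\text{ at }\beta_i)$ and similarly for the minimum, and taking the difference and bounding crudely by the sum over $i$ gives $H_\infty \le \sum_{i=1}^{r_P}\rho_i H_{\beta_i}$. The only subtlety to check is that distinct branches $\beta_i$ may coincide as points of $\bar C\cup\{\infty\}$; in that case the corresponding contributions simply add, which is consistent with the claimed bound (the sum is over $i$, not over distinct $\beta$'s), so no extra care is needed. I expect the main obstacle to be the bookkeeping around ramified branches ($\rho_i\notin\set N$) and around the case $\beta_i=\infty$: one needs the substitution-of-Puiseux-series argument to interact correctly with the normalization of Newton polygons at infinity as set up in \cite{vanhoeij97a}, and in particular to confirm that "valuation gets multiplied by $\rho_i$" holds verbatim for fractional $\rho_i$ and across the chart change at $\infty$. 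Everything else is a direct translation between "height of Newton polygon" and "spread of valuations of formal solutions," together with the containment of solution spaces that has already been established in the construction of $M$.
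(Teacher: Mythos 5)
Your proposal rests on the identification ``$H_\alpha$ equals the spread of the valuations (generalized exponents) of the formal solutions at $\alpha$,'' and this is where it breaks. The height of the Newton polygon in van Hoeij's sense is the total rise of its positive-slope edges, i.e.\ the \emph{irregularity}: the sum, over a basis of formal solutions, of the ramification slopes of their exponential parts. It says nothing about the exponents themselves. At an ordinary point (or any regular singular point) the height is $0$, while the valuations of a solution basis already spread over $\{0,1,\dots,r_L-1\}$; note that the paper's Remark~\ref{rem:newton} (generically $H_{\beta_i}=0$) would be false under your reading. So the quantity you are bounding is not the quantity in the statement. Moreover, even granting your identification, the final step ``the spread of the union of the sets $\{\rho_i e\}$ is at most $\sum_i\rho_i\cdot(\text{spread of the $i$-th set})$'' is false in general: $\max$ of a union minus $\min$ of a union is not controlled by the sum of the individual differences unless the sets share a common point.

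The salvageable (and essentially correct) part of your plan is the local analysis: substituting $g_i$ does multiply the relevant local datum at $\beta_i$ by $\rho_i$ -- but that datum must be the \emph{slope} of the exponential part of a solution (or, dually, the vertical stretch of the Newton polygon), not its valuation. The paper's proof works on the operator side: it writes $M=\LCLM\bigl(L(g_1,\tfrac1{g_1'}\partial),\dots,L(g_{r_P},\tfrac1{g_{r_P}'}\partial)\bigr)$, observes that the edges of $N_\infty(M)$ lie in the union of the edge sets of the pullback operators (which legitimizes summing the heights, where your union-of-valuation-sets argument does not), and then shows by the substitution $\tilde L(x^{-\rho_i}h_1,\;xh_2\partial)$ that the Newton polygon of $L(g_i,\tfrac1{g_i'}\partial)$ at infinity is that of $L$ at $\beta_i$ stretched vertically by $\rho_i$, hence has height $\rho_iH_{\beta_i}$. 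If you prefer a solution-based argument, you would need to replace ``valuation'' by ``slope of the exponential part'' throughout and then justify that the irregularity of $M$ at infinity is at most the sum of the slopes over the spanning set $\{f_j\circ g_i\}$; as written, the proof does not establish the lemma.
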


\begin{proof}
  Writing $L$ as $L(x, \partial) \in C[x][\partial]$, we have
  $$
    M = \LCLM \left( L\left(g_1, \frac{1}{g_1'} \partial\right), \ldots, L\left( g_{r_P}, \frac{1}{g_{r_P}'} \partial\right) \right).
  $$
    Hence, the set of edges of $N_\infty$ is a subset of the union of sets of edges of Newton polygons of the operators $L(g_i,\frac1{g_i'}\partial)$,
  so the height of $N_\infty$ is bounded by the sum of the heights of the Newton polygons of these operators.
  Consider $g_1$ and assume that $\beta_1 \in\bar C$.
  Then the Newton polygon for $L$ at $\beta_1$ is constructed from the set of monomials of $L$ written as an element of $C(x - \beta_1)[(x - \beta_1)\partial]$.
  Let $L(x, \partial) = \tilde{L}(x - \beta_1, (x- \beta_1)\partial)$, then
  \[
  L\bigl(g_1, \frac{1}{g_1'} \partial\bigr)
  = \tilde{L}\bigl( g_1 - \beta_1, \frac{g_1{-}\beta_1}{g_1'}\partial \bigr)
  = \tilde{L}\bigl( x^{-\rho_1}h_1(x), xh_2(x) \partial\bigr),
  \]
  where $h_1(\infty)$ and $h_2(\infty)$ are nonzero elements of $\bar{C}$.
  Since $h_1$ and $h_2$ do not affect the shape of the Newton polygon at infinity, the Newton polygon at infinity for $L(g_1, \frac{1}{g_1'} \partial)$ 
  is obtained from the Newton polygon for $L$ at $\beta_1$ by stretching it vertically by the factor~$\rho_1$,
  so its height is equal to $\rho_1H_{\beta_1}$.

  The case $\beta_1 = \infty$ is analogous using $L = \tilde{L}\left( \frac{1}{x}, -x\partial\right)$.
\end{proof}

\begin{remark}\label{rem:newton}
  Generically, the $\beta_i$'s will be ordinary points of~$L$, so it is
  fair to expect $H_{\beta_i}=0$ for all $i$ in most situations.
\end{remark}

The following theorem is a consequence of Theorem~\ref{thm:9} and the discussion above.

\begin{theorem}\label{thm:curve}
  Let $\rho_1, \ldots, \rho_{r_P}$ be as above.
  Assume that all removable singularities of $M$ are removable at cost at most~$c$.
  Let $\delta = \sum\limits_{i = 1}^{r_P} \rho_i H_{\beta_i} + d_P(4r_Lr_P - 2r_L + d_L)$.
  Let $r \geq \deg_{\partial} M + c - 1$ and
  \[
  d \geq \delta \cdot \Bigl(1-\frac{c}{r-\deg_\partial(M)+1}\Bigr) + \deg_x M \cdot \frac{c}{r - \deg_\partial (M) + 1}.
  \]
  Then there exists an operator $Q \in C(x)[\partial]$ such that $QM \in C[x][\partial]$ and $\deg_{\partial} (QM) = r$ and $\deg_x (QM) = d$.
\end{theorem}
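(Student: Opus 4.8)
The plan is to derive Theorem~\ref{thm:curve} as a direct specialization of Theorem~\ref{thm:9}, feeding it the estimates accumulated in the preceding lemmas and remarks. First I would set $m = 1$ and group \emph{all} the removable factors of $\lc_\partial(M)$ into a single ``virtual'' coprime factor $p_1 = q_{\rem}$; this is legitimate because if every removable singularity is removable at cost at most $c$, then the product $q_{\rem}$ of all of them is removable at cost at most $c$ as well (one may clear the singularities one at a time, or simply note that the hypothesis of Theorem~\ref{thm:9} only requires \emph{some} valid cost, and $c$ serves). With $m=1$, $c_1 = c$, and $\deg_\partial(M)$ the order of the minimal operator, Theorem~\ref{thm:9} already gives the existence of $Q$ with the stated order $r$ and with
\[
d \;\geq\; \deg_x(M) - \Bigl(1 - \tfrac{c}{r - \deg_\partial(M) + 1}\Bigr)^{+}\deg_x(q_{\rem}),
\]
so the entire task reduces to replacing $\deg_x(q_{\rem})$ by a provable lower bound and simplifying the arithmetic.

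Next I would invoke the chain of estimates developed in Section~\ref{sec:3}. The inequality $\deg q_{\rem} \geq \deg_x(M) - H_\infty - d_P(4r_Lr_P - 2r_L + d_L)$, obtained from $\deg q = \deg q_{\rem} + \deg q_{\nonrem}$, the bound $\deg q_{\nonrem} \leq d_P(4r_Lr_P - 2r_L + d_L)$ of the second lemma, and $H_\infty \leq \deg_x(M) - \deg\lc_\partial(M)$, is the first input. The second input is $H_\infty \leq \sum_{i=1}^{r_P}\rho_i H_{\beta_i}$ from the third lemma. Combining these gives $\deg q_{\rem} \geq \deg_x(M) - \delta$ with $\delta = \sum_{i=1}^{r_P}\rho_i H_{\beta_i} + d_P(4r_Lr_P - 2r_L + d_L)$, exactly the quantity named in the statement. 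Substituting this lower bound into the displayed inequality from Theorem~\ref{thm:9} and writing $\theta := c/(r - \deg_\partial(M) + 1)$, one gets $d \geq \deg_x(M) - (1-\theta)^{+}(\deg_x(M) - \delta)$; since the hypothesis $r \geq \deg_\partial(M) + c - 1$ forces $0 < \theta \leq 1$, the positive-part truncation is inactive and $(1-\theta)^+ = 1-\theta$, so the right-hand side is $\deg_x(M) - (1-\theta)(\deg_x(M)-\delta) = \delta(1-\theta) + \deg_x(M)\cdot\theta$, which is precisely the bound claimed. One also has to observe that Theorem~\ref{thm:9} produces an operator of \emph{exactly} order $r$ and degree $d$ for every such pair, not merely an inequality, which matches the conclusion.

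The one genuinely non-routine point is justifying that lumping all removable factors into one factor $q_{\rem}$ of cost $\leq c$ is sound, i.e.\ that Theorem~\ref{thm:9} may be applied with $m=1$ and $p_1 = q_{\rem}$. Strictly, Theorem~\ref{thm:9} wants $p_1 \mid \lc_\partial(M)$ removable at cost $c_1$; here $q_{\rem}$ is by definition the product of the removable factors, so it divides $\lc_\partial(M)$, and a standard argument (successive desingularization, or the observation that an order-$c$ operator removing the ``worst'' factor removes all of them after the Newton-polygon contributions are accounted for) shows $q_{\rem}$ is removable at cost at most $c$. I would state this briefly rather than belabor it, since it is implicit in the way costs are used in~\cite{jaroschek13a}. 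The remaining steps --- substituting the degree bound, checking the sign of $1-\theta$ under the order hypothesis, and reading off the simplified expression --- are entirely mechanical. Remark~\ref{rem:newton} is then worth recalling informally: generically all $\beta_i$ are ordinary points of $L$, so $H_{\beta_i} = 0$ and $\delta$ collapses to $d_P(4r_Lr_P - 2r_L + d_L)$, but this is a comment on the typical magnitude rather than part of the proof.
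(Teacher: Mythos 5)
Your proof is correct and matches the paper's intent: the paper gives no written proof, saying only that the theorem ``is a consequence of Theorem~\ref{thm:9} and the discussion above,'' and your reconstruction --- substituting the lower bound $\deg q_{\rem}\geq \deg_x(M)-\delta$ (from the two lemmas on $\deg q_{\nonrem}$ and $H_\infty$) into Theorem~\ref{thm:9} and checking that $r\geq\deg_\partial(M)+c-1$ makes $(1-\theta)^+=1-\theta$ --- is exactly that consequence. The one step you flag as non-routine, lumping all removable factors into a single $p_1=q_{\rem}$ removable at cost $\leq c$, is avoidable rather than in need of justification: apply Theorem~\ref{thm:9} with the pairwise coprime removable factors kept separate at their individual costs $c_i\leq c$, and use $\bigl(1-\frac{c_i}{r-\deg_\partial(M)+1}\bigr)^+\geq\bigl(1-\frac{c}{r-\deg_\partial(M)+1}\bigr)^+$ to bound the resulting sum from below by $(1-\theta)\deg q_{\rem}$, which yields the same inequality without ever asserting that $q_{\rem}$ itself is removable at cost $c$.
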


Note that $\deg_x(M)$ may be replaced with the expression from Theorem~\ref{thm:min_operator}
or Conjecture~\ref{conj}.


\subsection{Cost of Removable Factors}

The goal of this final section is to explain why in the case $r_P > 1$ one can almost always choose $c=1$ in Theorem~\ref{thm:curve}.

For a differential operator $L \in C[x][\partial]$, by $M(L)$ we denote the minimal operator $M$ such that $Mf(g(x)) = 0$ whenever $L f = 0$ and $P(x, g(x)) = 0$.
We want to investigate the possible behaviour of a removable singularity at $\alpha \in C$ when $L$ varies and $P$ with $r_P>1$ is fixed.
Without loss of generality, we assume that $\alpha = 0$.

We will assume that:
\begin{enumerate}
  \item[(S1)]\label{s1} $P(0, y)$ is a squarefree polynomial of degree~$r_P$;

  \kern-\smallskipamount
  \item[(S2)]\label{s2} $g(0)$ is not a singularity of $L$ for any root $g(x)$ of $P$;

  \kern-\smallskipamount
  \item[(G)]\label{g} Roots of $P(x, g(x)) = 0$ at zero are of the form $g_i(x) = \alpha_i + \beta_i x + \gamma_i x^2 + \ldots$, where $\beta_2, \ldots, \beta_{r_P}$ are nonzero, 
  and either $\beta_1$ or $\gamma_1$ is nonzero.
\end{enumerate}

Conditions \hyperref[s1]{(S1)} and \hyperref[s2]{(S2)} ensure that zero is not a potential true singularity of $M(L)$.
Condition \hyperref[g]{(G)} is an essential technical assumption on~$P$.
We note that it holds at all nonsingular points (not just at zero) for almost all $P$, because this condition is violated at $\alpha$ iff some root of $P(\alpha, y) = P_x(\alpha, y) = 0$ (this means that at least one of $\beta_i$ is zero) is also a root of either $P_{xx}(\alpha, y) = 0$ (then $\gamma_i$ is also zero) or $P_{xy}(\alpha, y) = 0$ (then there are at least two such $\beta$'s).
For a generic $P$ this does not hold. 

Under these assumptions we will prove the following theorem.
Informally speaking, it means that if $M(L)$ has an apparent singularity at zero, then it almost surely is removable at cost one.

\begin{theorem}\label{thm:generic}
  Let $d_L$ be such that $d_L \geq (r_Lr_P - r_L + 1)r_P$.
  By $V$ we denote the (algebraic) set of all $L \in \bar{C}[x][\partial]$ of order $r_L$ and degree $\leq d_L$ 
  such that the leading coefficient of $L$ does not vanish at $\alpha_1, \ldots, \alpha_{r_P}$.
  We consider two (algebraic) subsets in $V$
  \begin{alignat*}1
          X &= \bigl\{ L \in V \bigm| \text{$M(L)$ has an apparent singularity at $0$}\bigr\},\\
          Y &= \bigl\{ L \in V \bigm| \text{$M(L)$ has an apparent singularity at $0$}\\[-2pt]
            &\hspace{50pt} \text{which is not removable at cost one} \bigr\}.
   \end{alignat*}
   Then, $\dim X > \dim Y$ as algebraic sets. 
\end{theorem}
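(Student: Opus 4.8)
The plan is to compare the two algebraic sets $X$ and $Y$ by analyzing the local behaviour of $M(L)$ at the origin as $L$ ranges over~$V$. First I would set up local coordinates: by assumptions \hyperref[s1]{(S1)} and \hyperref[s2]{(S2)} the origin is an apparent (at worst) singularity of $M(L)$, and the generalized exponents of $M(L)$ at $0$ are obtained by pulling back the $r_L$ local exponents of $L$ at each of the points $\alpha_1,\dots,\alpha_{r_P}$ through the substitutions $g_i(x)=\alpha_i+\beta_i x+\cdots$. Since $L$ is nonsingular at each $\alpha_i$, its local exponents there are $0,1,\dots,r_L-1$, and the exponents of a local basis $\{f^{(j)}\circ g_i\}$ of the composition space at $0$ can be read off from the orders of vanishing of $f_k\circ g_i - (\text{lower-order part})$. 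Condition \hyperref[g]{(G)} is exactly what controls these orders: when $\beta_i\neq0$ the branch $g_i$ is (locally) an analytic change of variable, so the $r_L$ exponents contributed by $\alpha_i$ stay $\{0,1,\dots,r_L-1\}$; only the branch $g_1$, where $\beta_1$ may vanish and $\gamma_1\neq0$ takes over, can produce a ``defect'' in the exponent multiset.

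Next I would make precise the dictionary between the exponent multiset at $0$ and removability at cost one. The origin is an apparent singularity precisely when the exponents at $0$ are nonnegative integers but do not form the full set $\{0,1,\dots,r-1\}$ (here $r=\deg_\partial M$); it is removable at cost one precisely when the multiset of exponents is $\{0,1,\dots,r-1\}$ with exactly one value repeated once and exactly one value in $\{0,\dots,r\}$ missing -- equivalently, the ``exponent defect'' has size one, so that a single left-multiplication by a first-order operator with a suitable pole at $0$ can shift the offending exponent up by one. (This is the standard local description behind Theorem~\ref{thm:9} with $c_i=1$; I would cite \cite{jaroschek13a,tsai00} for it.) Thus $L\in X\setminus Y$ iff the local exponent defect of $M(L)$ at $0$ is exactly~$1$, while $L\in Y$ forces the defect to be at least~$2$.

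Then I would exhibit an explicit subfamily of $X$ of dimension $\dim V - 1$ (or more) that misses $Y$, and show $Y$ has strictly smaller dimension. The key computation: writing $f_1,\dots,f_{r_L}$ for the Taylor basis of $L$ at $\alpha_1$ (so $f_k$ has exponent $k-1$) and composing with $g_1$, the order of vanishing at $0$ of $f_k\circ g_1$ is $k-1$ if $\beta_1\neq0$, but if $\beta_1=0,\gamma_1\neq0$ it becomes $2(k-1)$ -- unless a lower $f_j\circ g_1$ already occupies that slot, in which case one must look at the next Taylor coefficient of $f_k$, and here the coefficients of $L$ enter linearly. Imposing that the composition space at $0$ has exponent multiset equal to $\{0,1,\dots,r-1\}$ with a single transposition is a codimension-one condition on $L$ (one linear equation on the relevant Taylor coefficient of $L$ at $\alpha_1$), so $X$ has a component of codimension one in $V$; demanding a defect of size $\ge 2$ imposes at least one further independent linear condition, so $\dim Y\le\dim V-2<\dim X$. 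The hypothesis $d_L\ge(r_Lr_P-r_L+1)r_P$ is used here to guarantee that $V$ is large enough that these Taylor coefficients of $L$ at the $\alpha_i$ can be prescribed independently -- i.e.\ the evaluation/jet map from $V$ to the relevant finite jet space at $\alpha_1,\dots,\alpha_{r_P}$ is surjective -- so the codimension counts are genuine and not vacuous.

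\textbf{Main obstacle.} The delicate part is not the dimension bookkeeping but pinning down exactly which linear functionals of the coefficients of $L$ control the second-order exponent defect, and proving they are independent of the one governing the first defect. When $\beta_1=0$ the exponents coming from the branch $g_1$ can collide in several patterns, and ruling out ``accidental'' cancellations that would make the would-be codimension-one locus $X$ actually smaller (or the codimension-two locus $Y$ accidentally large) requires a careful, case-by-case analysis of the Newton polygon of $L(g_1,\frac1{g_1'}\partial)$ at infinity under condition \hyperref[g]{(G)}. I expect this to be the technical heart of the argument; everything else is linear algebra over the jet spaces.
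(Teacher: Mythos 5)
Your overall strategy---pass to truncated solution data of $M(L)$ at the origin, translate ``apparent singularity'' and ``not removable at cost one'' into degeneracy conditions on that data, and compare dimensions, using the hypothesis on $d_L$ to make the jet map from $V$ surjective---is the same skeleton as the paper's proof. But the step you compress into ``demanding a defect of size $\ge 2$ imposes at least one further independent linear condition, so $\dim Y\le\dim V-2$'' is exactly the content of the theorem, and as stated it does not go through. The two conditions are not linear in the coefficients of $L$: writing $A$ for the $r_Lr_P\times(r_Lr_P+1)$ matrix of truncated compositions $f_j\circ g_i$, membership in $X$ forces $\det A_{(1)}=0$ and membership in $Y$ forces in addition $\det A_{(2)}=0$, where $A_{(1)}$ and $A_{(2)}$ delete the last and the last-but-one column. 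These are determinantal hypersurfaces of high degree, and apparent singularities arise generically from the interaction of all $r_P$ branches (they occur even when every $\beta_i\ne 0$), so $X$ is not cut out by ``one linear equation on a Taylor coefficient of $L$ at $\alpha_1$.'' To conclude $\dim Y<\dim X$ you must rule out that some top-dimensional component of $\{\det A_{(1)}=0\}$ is entirely contained in $\{\det A_{(2)}=0\}$; for a reducible hypersurface this can happen, and then the codimension count fails. The paper resolves this by (i) transporting the problem to the space of fundamental matrices, where the relevant Taylor coefficients become genuinely free coordinates (this is where $d_L\ge(r_Lr_P-r_L+1)r_P$ is used, via surjectivity of the jet map with equidimensional fibers), (ii) proving that $\det A_{(1)}$ is an \emph{irreducible} polynomial in these coordinates---this is where condition (G) enters, through a block-triangular normal form with separate cases $\beta_1\ne 0$ and $\beta_1=0,\gamma_1\ne0$---and (iii) showing $\det A_{(1)}\nmid\det A_{(2)}$ by comparing degrees in disjoint subsets of the variables. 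Your proposed substitute, a Newton-polygon analysis of $L(g_1,\frac1{g_1'}\partial)$, only sees the single branch $g_1$ and cannot detect these cross-branch degeneracies.

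A second, smaller inaccuracy is in your exponent dictionary. At an apparent singularity of the \emph{minimal} operator of a space of power series the local exponents are $r$ \emph{distinct} nonnegative integers (the valuations attained on the solution space), so ``exactly one value repeated once'' is not the right picture; cost-one removability corresponds to the exponent set being $\{0,\dots,r\}\setminus\{m\}$ for some $m\le r-1$, which is a union of $r$ loci. Fortunately only one-sided implications are needed: if the singularity is not removable at cost one, then the exponent set is neither $\{0,\dots,r-1\}$ nor $\{0,\dots,r-2,r\}$, i.e.\ both $\det A_{(1)}$ and $\det A_{(2)}$ vanish. This is all the paper uses, and restricting to these implications would also simplify your argument.
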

   
For $\alpha \in \bar{C}$, by $\Op_{\alpha}(r, d)$ we denote the space of differential operators in $\bar{C}[x - \alpha][\partial]$ of order at most $r$ and degree at most~$d$.
By $\NOp_{\alpha}(r, d) \subset \Op_{\alpha}(r, d)$ we denote the set of $L$ such that $\ord L = r$ and $(\lc_{\partial} L) (\alpha) \neq 0$. 
Then $$V \subset \NOp_{\alpha_1}(r_L, d_L) \cap \ldots \cap \NOp_{\alpha_{r_P}} (r_L, d_L).$$

To every operator $L \in \NOp_{\alpha}(r, d_0)$ and $d_1 \geq r$, we assign \emph{a fundamental matrix of degree $d_1$} at $\alpha$, denote it by $F_{\alpha}(L, d_1)$.
It is defined as the $r \times (d_1 + 1)$ matrix such that the first $r$ columns constitute the identity matrix~$I_r$, 
and every row consists of the first $d_1 + 1$ terms of some power series solution of $L$ at $x = \alpha$.
Since $L \in \NOp_{\alpha}(r, d_0)$, $F(L, d_1)$ is well defined for every $d_1$.

By $F(r, d)$ we denote the space of all possible fundamental matrices of degree $d$ for operators of order $r$.
This space is isomorphic to $\mathbb{A}^{r(d + 1 - r)}$.
The following proposition says that a generic operator has generic and independent fundamental matrices, so we can work with these matrices instead of working with operators.

\begin{proposition}\label{prop:generic_solutions}
  Let $\varphi\colon V \to \left( F(r_L, r_Lr_P) \right)^{r_P}$ be the map sending $L \in V$ to $F_{\alpha_1}(L, r_Lr_P) \oplus \ldots \oplus F_{\alpha_{r_P}}(L, r_Lr_P)$.
  Then $\varphi$ is a surjective map of algebraic sets, and all fibers of $\varphi$ have the same dimension.
\end{proposition}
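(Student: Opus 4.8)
The plan is to exhibit an explicit construction which, given a tuple of fundamental matrices $(F_1,\dots,F_{r_P}) \in \bigl(F(r_L, r_Lr_P)\bigr)^{r_P}$, produces an operator $L \in V$ whose solution spaces at $\alpha_1,\dots,\alpha_{r_P}$ realize exactly these matrices; this proves surjectivity. For the fiber dimension, I would then observe that the construction has enough free parameters and that the fibers are cut out by a linear (affine) condition, so they are all translates of a single linear space.

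First I would set up the interpolation. Fix the target matrices $F_i = F_{\alpha_i}$. Each row of $F_i$ is the truncation to degree $r_Lr_P$ of a prospective power series solution $s_{i,1},\dots,s_{i,r_L} \in \bar C[[x-\alpha_i]]$ at $\alpha_i$. Writing the unknown operator as $L = \sum_{k=0}^{r_L} \ell_k(x)\,\partial^k$ with $\deg_x \ell_k \le d_L$ and $\ell_{r_L}(\alpha_i)\ne 0$ for all $i$, the condition ``$s_{i,j}$ is a solution of $L$ modulo $(x-\alpha_i)^{r_Lr_P+1}$'' is a system of linear equations in the coefficients of the $\ell_k$: for each $i$ and each $j$ it imposes $r_Lr_P + 1 - r_L$ conditions (the lowest $r_L$ coefficients of $L(s_{i,j})$ vanish automatically because the first $r_L$ columns of $F_i$ form $I_{r_L}$ and one is free to choose how $L$ acts there — more precisely, once $\ell_{r_L}$ is normalized, the recurrence determines everything, so the genuine constraints are the ones forcing compatibility of the truncated data). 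Summing over $i$ and $j$, the total number of linear constraints is at most $r_P \cdot r_L \cdot (r_Lr_P + 1 - r_L) = r_L r_P(r_Lr_P - r_L + 1)$, which by hypothesis is $\le d_L r_P \le (d_L+1)r_P$, hence strictly less than the number $(d_L+1)(r_L+1)$ of coefficients of $L$ when $r_L \ge 1$. So the linear system is solvable, and in fact its solution space has dimension $(d_L+1)(r_L+1)$ minus the rank of the constraint matrix.

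The key point making this work cleanly is that the constraints at distinct points $\alpha_i$ are ``independent'': by a Chinese-remainder / Hermite-interpolation argument the evaluation-and-truncation maps at the distinct points $\alpha_1,\dots,\alpha_{r_P}$ are jointly surjective onto the product, provided $d_L$ is large enough — which is exactly the role of the hypothesis $d_L \ge (r_Lr_P-r_L+1)r_P$. This shows $\varphi$ is surjective. For the fiber statement, I would note that the preimage $\varphi^{-1}(F_1,\dots,F_{r_P})$ is precisely the set of $L \in V$ satisfying the above inhomogeneous linear system; since the homogeneous part of the system (the constraint matrix) does not depend on the chosen target $(F_i)$, every nonempty fiber is an affine subspace of the fixed vector space $\Op(r_L,d_L)$ parallel to the same kernel, intersected with the Zariski-open condition defining $V$ (non-vanishing of $\lc_\partial L$ at the $\alpha_i$). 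Hence all fibers have the same dimension, namely $(d_L+1)(r_L+1)$ minus the (constant) rank of the constraint matrix.

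The main obstacle is the independence/surjectivity claim: one must verify that the combined truncation maps at the $r_P$ distinct points hit the full product $\bigl(F(r_L,r_Lr_P)\bigr)^{r_P}$, i.e.\ that the constraint matrix has the expected rank and no unexpected linear relations force the fibers to be empty for some targets. This is where the degree bound $d_L \ge (r_Lr_P - r_L+1)r_P$ is used; concretely I would argue it via a dimension count combined with a genericity/specialization argument (e.g.\ exhibit one operator $L$, such as a suitable product of first-order factors or an operator with prescribed local exponents, for which the maps are visibly independent, and then invoke semicontinuity of rank). Verifying that $V$ is nonempty in each fiber — i.e.\ that the open condition $\lc_\partial L(\alpha_i)\ne 0$ can be met simultaneously with the interpolation constraints — is a routine genericity remark once surjectivity onto the product of fundamental-matrix spaces is established, since $\ell_{r_L}$ is essentially unconstrained by the truncation data.
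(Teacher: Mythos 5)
There is a genuine gap, and it sits exactly where you locate "the main obstacle." The fiber of $\varphi$ over a target $(F_1,\dots,F_{r_P})$ is cut out by the conditions $L(\tilde f_{i,j})\equiv 0 \pmod{(x-\alpha_i)^{d_0+1}}$ with $d_0=r_Lr_P-r_L$, where $\tilde f_{i,j}$ is the polynomial given by the $j$-th row of $F_i$. This system is linear in the coefficients of $L$, but its \emph{matrix} is built from the coefficients of the $\tilde f_{i,j}$ and hence depends on the target; your claim that "the homogeneous part of the system does not depend on the chosen target" is false (normalizing $\ell_{r_L}$ to make the system inhomogeneous does not help: the remaining matrix, multiplication by $\tilde f_{i,j}^{(k)}$ followed by truncation, still varies with the target). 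So the fibers are not translates of one fixed kernel, and the equidimensionality claim does not follow. Likewise, comparing the number of constraints with the number of unknowns, plus semicontinuity of rank from one well-chosen $L$, only shows that the constraint matrix has full row rank for \emph{generic} targets, i.e.\ that $\varphi$ is dominant with generically constant fiber dimension — not that it is surjective with all fibers equidimensional, which is what the proposition asserts and what Theorem~\ref{thm:generic} needs. (There is also a small arithmetic slip: $r_Lr_P(r_Lr_P-r_L+1)\le d_Lr_P$ would need $d_L\ge r_L(r_Lr_P-r_L+1)$, which is not the stated hypothesis when $r_L>r_P$; the correct comparison is $(d_L+1)(r_L+1)> r_Ld_L\ge r_Lr_P(d_0+1)$.)

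The missing idea, which is how the paper closes both gaps at once, is to factor $\varphi$ through the joint truncation map $\varphi_1\colon V\to\bigoplus_i\NOp_{\alpha_i}(r_L,d_0)$ and then analyze a \emph{single} point. Your Chinese-remainder/Hermite step is exactly the right argument for $\varphi_1$: it is linear, its kernel consists of operators all of whose coefficients are divisible by $\prod_i(x-\alpha_i)^{d_0+1}$, and the hypothesis $d_L\ge(d_0+1)r_P$ makes it surjective with fibers that genuinely are parallel translates of a fixed kernel. The remaining step (the paper's Lemma~\ref{lem:generic_solutions}) is not a rank/genericity count but an explicit unitriangular change of coordinates: for a monic $L$ with $(\lc_\partial L)(\alpha)\ne 0$, extracting the constant term of $\partial^iL(f_j)$ expresses $b_{j,i}$ as a polynomial in $a_{j,i}$ and lower-order data and vice versa, so the truncated monic coefficients and the fundamental-matrix entries determine each other bijectively for \emph{every} target; the fiber is then exactly the choice of leading coefficient, $\bar C^*\times\bar C^{d_0}$. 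If you want to salvage your direct interpolation approach, you must replace the genericity argument by this triangular-recurrence argument showing the constraint system has full row rank for all targets, not just generic ones.
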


For the proof we need the following lemma.
\begin{lemma}\label{lem:generic_solutions}
  Let $\psi \colon \NOp_{\alpha}(r, d) \to F(r, d + r)$ be the map sending $L$ to $F_{\alpha}(L, d + r)$.
  Then $\psi$ is surjective and all fibers have the same dimension.
\end{lemma}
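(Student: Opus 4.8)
\textbf{Proof plan for Lemma~\ref{lem:generic_solutions}.}
The plan is to make the map $\psi$ explicit and show that it is, up to a linear change of coordinates on the source, a linear surjection with constant-dimensional fibers. First I would fix a fundamental matrix $F\in F(r,d+r)$; its rows $s_1,\dots,s_r\in\bar C[x-\alpha]/(x-\alpha)^{d+r+1}$ are (truncations of) the candidate power-series solutions. An operator $L=\sum_{k=0}^r \ell_k(x)\partial^k\in\NOp_\alpha(r,d)$ has $F_\alpha(L,d+r)=F$ exactly when $L(s_i)\equiv 0 \bmod (x-\alpha)^{?}$ to the appropriate truncation order for each $i$: concretely, writing $s_i$ as an honest power series solution (which is determined by the first $r$ coefficients, i.e.\ by the identity block of $F$) and reducing $L(s_i)$ modulo $(x-\alpha)^{d+r-r+1}=(x-\alpha)^{d+1}$, we get linear conditions on the coefficients of $\ell_0,\dots,\ell_r$. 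So $\psi^{-1}(F)$ is (the intersection with $\NOp_\alpha(r,d)$ of) an affine-linear subspace of $\Op_\alpha(r,d)$ cut out by these conditions.

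The key step is a dimension count showing this linear subspace is nonempty and has dimension independent of $F$. The source $\Op_\alpha(r,d)$ has dimension $(r+1)(d+1)$. Requiring $L(s_i)\equiv 0\bmod(x-\alpha)^{d+1}$ imposes $d+1$ linear equations per solution $s_i$, for $i=1,\dots,r$, hence $r(d+1)$ equations. I would argue these $r(d+1)$ equations are linearly independent: this is where one uses that the $s_i$ are linearly independent (their identity-block initial segments are the standard basis vectors), together with the standard fact that a nonzero operator of order $\le r$ cannot annihilate $r+1$ linearly independent power series, packaged via a Wronskian/Vandermonde-type nonvanishing argument to see that the coefficient matrix of the combined system has full rank $r(d+1)$ \emph{uniformly in $F$}. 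Granting independence, the fiber has dimension $(r+1)(d+1)-r(d+1)=d+1$, constant in $F$. Surjectivity then follows because every fiber is nonempty (the system, being of full row rank $r(d+1)<(r+1)(d+1)$, always has solutions), and one checks that a generic point of each fiber actually lies in $\NOp_\alpha(r,d)$, i.e.\ has $\ell_r(\alpha)\ne 0$ and $\ord L=r$ — this holds because the condition $\ell_r(\alpha)\ne 0$ is the complement of a proper linear subspace of the fiber, which is nonempty as soon as $\ell_r\equiv 0$ is not forced, and that in turn follows from the same rank computation (the recurrence defining a solution from its first $r$ coefficients is driven by $\ell_r$, so $\ell_r$ is free to be a unit). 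The dimension of the target is $r(d+1)$, consistent with $\dim\NOp_\alpha(r,d)-\dim(\text{fiber})=(r+1)(d+1)-(d+1)$.

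The main obstacle I anticipate is the uniform full-rank claim: one must verify that the $r(d+1)\times(r+1)(d+1)$ coefficient matrix of the linear system ``$L(s_i)\equiv 0\bmod(x-\alpha)^{d+1}$, $i=1,\dots,r$'' has rank exactly $r(d+1)$ for \emph{every} choice of fundamental matrix $F$, not merely generically, since constancy of all fiber dimensions (not just the generic one) is what is asserted. I would handle this by exhibiting an explicit $r(d+1)\times r(d+1)$ submatrix whose determinant is (up to sign) a power of the leading entries of the standard-basis initial block — hence a nonzero constant independent of the free parameters of $F$ — for instance by ordering the equations so that the contribution of $\partial^r$ to $L(s_i)$ is triangular against the unknowns $\ell_r$-coefficients, using that the top-degree coefficient of $s_i$ in each truncation degree is controlled by the identity block. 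Once this determinant is seen to be a nonzero element of $\bar C$ independent of $F$, everything else is bookkeeping, and surjectivity plus equidimensionality of fibers both drop out. (This lemma then feeds into Proposition~\ref{prop:generic_solutions} by taking the fiber product over the $r_P$ distinct points $\alpha_1,\dots,\alpha_{r_P}$, where independence of the conditions at distinct points is immediate.)
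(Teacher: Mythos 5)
Your plan is correct, but it is organized differently from the paper's proof. You treat the fiber $\psi^{-1}(F)$ as the solution set of the homogeneous linear system ``$L(s_i)\equiv 0 \bmod (x-\alpha)^{d+1}$, $i=1,\dots,r$'' inside the full $(r+1)(d+1)$-dimensional ansatz space, and reduce everything to showing this system has rank exactly $r(d+1)$ for \emph{every} $F$. The paper instead first normalizes $L$ to be monic and proves, by a triangular induction on the coefficient index, that the map from (truncated) monic operators to fundamental matrices is a polynomial isomorphism; the fiber of $\psi$ over $F$ is then exactly the set of truncations of $\ell_r\tilde L$ where $\tilde L$ is the unique monic representative and $\ell_r$ ranges over degree-$\le d$ polynomials with $\ell_r(\alpha)\ne 0$, giving $\bar C^{*}\times\bar C^{d}$ at once. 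Your route buys a cleaner codimension count; the paper's buys an explicit parametrization of the fiber and makes surjectivity immediate (the monic representative is constructed by the recurrence, which is also what you ultimately need to know that $\ell_r(\alpha)\ne 0$ is not forced on your solution space). Two cautions on your key step: your proposed $r(d+1)\times r(d+1)$ minor ``triangular against the $\ell_r$-coefficients'' cannot be literally that, since $\ell_r$ contributes only $d+1$ unknowns; the workable version is the Wronskian argument you also mention --- writing any solution $L$ of your system as $\ell_r\tilde L+R$ with $\ord R<r$ and using invertibility of the $r\times r$ matrix $(\tilde s_i^{(k)})$ over $\bar C[[x-\alpha]]$ (its value at $\alpha$ is $\mathrm{diag}(0!,\dots,(r-1)!)$ thanks to the identity block) to force $R\equiv 0\bmod (x-\alpha)^{d+1}$. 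With that replacement your argument closes, and it does establish constancy of \emph{all} fiber dimensions, not just the generic one.
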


\begin{proof}
  First we assume that $L$ is of the form $L = \partial^{r_L} + a_{r_L - 1}(x) \partial^{r_L - 1} + \ldots + a_0(x)$, 
  and $a_j(x) = a_{j, d}x^d + \ldots + a_{j, 0}$, where $a_{j, i} \in \bar{C}$.
  We also denote the truncated power series corresponding to the $j + 1$-st row of $F(L, d + r_L)$ by $f_j$ and write it as
  $$
  f_j = x^{j} + \sum\limits_{i = 0}^d b_{j, i} x^{r_L + i}, \text{ where } b_{j, i} \in \bar{C}.
  $$
  We will prove the following claim by induction on $i$:
          
  \textbf{Claim.} \textit{For every $0 \leq j \leq r_L - 1$ and every $0 \leq i \leq d$, $b_{j, i}$ can be written as a polynomial in $a_{p, q}$ with $q < i$ and $a_{j, i}$.
  And, vice versa, $a_{j, i}$ can be written as a polynomial in $b_{p, q}$ with $q < i$ and $b_{j, i}$.}
  
  The claim would imply that $\psi$ defines an isomorphism of algebraic varieties between $F_{\alpha}(r_P, d + r)$ and the subset of monic operators in $\NOp_{\alpha}(r, d)$.
      
  For $i = 0$, looking at the constant term of $L (f_j)$, we obtain that $j! a_{j, 0} + r_L! b_{j, 0} = 0$.
  This proves the base case of the induction.
  
  Now we consider $i > 0$ and look at the constant term of $\partial^i L(f_j)$.
  The operator $\partial^i L$ can be written as
  \begin{alignat*}1
    \partial^i L &=
    \partial^{i + r_L} + a_{r_L - 1}^{(i)}(x) \partial^{r_L - 1} + \ldots + a_0^{(i)}(x)\\
    &\qquad{}+ \sum\limits_{k < i, l < i + r_L, s \leq d} c_{k, l, s} a_s^{(k)}(x) \partial^l 
  \end{alignat*}
  Applying this to $f_j$, we obtain the following expression for the constant term:
  $$
  (i + r_L)! b_{j, i} + j! i! a_{j, i} + \sum\limits_{k < i, l < i + r_L, s \leq d} \tilde{c}_{k, l, s} a_{s, k} b_{j, l - r_L} = 0.
  $$
  Applying the induction hypothesis to the equalities
  $$
  b_{j, i} = \frac{-1}{(i + r_L)!} \left( j! i! a_{j, i} + \sum\limits_{k < i, l < i + r_L, s \leq d} \tilde{c}_{k, l, s} a_{s, k} b_{j, l - r_L} \right)
  $$
  $$
  a_{j, i} = \frac{-1}{i! j!} \left( (i + r_L)! b_{j, i} + \sum\limits_{k < i, l < i + r_L, s \leq d} \tilde{c}_{k, l, s} a_{s, k} b_{j, l - r_L} \right)
  $$
  we prove the claim.

  The above proof also implies that $F(L, d + r)$ is completely determined by the truncation of $L$ at degree $d + 1$.
  So, for arbitrary $L \in \NOp_{\alpha}(r, d)$, $F(L, d) = F(\tilde{L}, d)$, where $\tilde{L}$ is the truncation of $\frac{1}{\lc_\partial L} L$ at degree $d + 1$,
  which is monic in~$\partial$.
  Hence, every fiber of $\psi$ is isomorphic to the set of all polynomials of degree at most $d$ with nonzero constant term.
  This set is isomorphic to $\bar{C}^{*} \times \bar{C}^d$.
\end{proof}

\begin{proof}[Proof of Proposition~\ref{prop:generic_solutions}]
  Let $d_0=r_Lr_P - r_L$.
  We will factor $\varphi$ as a composition 
  \[
    V \xrightarrow{\varphi_1} \bigoplus_{i=1}^{r_P} \NOp_{\alpha_i}(r_L, d_0)
      \xrightarrow{\varphi_2} F(r_L, r_Lr_P)^{r_P},
  \]
  where $\varphi_2$ is a component-wise application of $F_{\alpha_i}(\ast, d_0)$ and $\varphi_1$ sends $L \in V$ to 
  a vector whose $i$-th coordinate is the truncation at degree $d_0 + 1$ of $L$ written as an element of $\bar{C}[x - \alpha_i][\partial]$.
  We will prove that both these maps are surjective with fibers of equal dimension.

  The map $\varphi_1$ can be extended to $$\varphi_1\colon \Op_0(r_L, d_L) \to \Op_{\alpha_1}(r_L, d_0) \oplus \ldots \oplus \Op_{\alpha_{r_P}}(r_L, d_0).$$
  This map is linear, so it is sufficient to show that the dimension of the kernel is equal to the difference of the dimensions of the source space and the target space.
  The latter number is equal to $(d_L + 1)(r_L + 1) - (d_0 + 1)(r_L + 1)r_P$.
  Let $L \in \Ker\varphi_1$.
  This is equivalent to the fact that every coefficient of $L$ is divisible by $(x - \alpha_i)^{d_0 + 1}$ for every $1 \leq i \leq r_P$.
  The dimension of the space of such operators is equal to $(r_L + 1)(d_L + 1 - r_P(d_0 + 1)) \geq 0$, so $\varphi_1$ is surjective.

  Lemma~\ref{lem:generic_solutions} implies that $\varphi_2$ is also surjective and all fibers are of the same dimension.
\end{proof}


Let $g_1(x), \ldots, g_{r_P}(x) \in \bar{C}[[x]]$ be solutions of $P(x, y) = 0$ at zero.   
Recall that $g_i(x) = \alpha_i + \beta_i x + \ldots$ for all $1 \leq i \leq r_P$, and by~\hyperref[g]{(G)} we can assume that $\beta_2, \ldots, \beta_{r_P}$ are nonzero.

Consider $A \in F(r_L, d)$, assume that its rows correspond to truncations of power series $f_1, \ldots, f_{r_L} \in \bar{C}[[x - \alpha_i]]$.
By $\varepsilon(g_i, A)$ we denote the $r_L \times (d + 1)$-matrix whose rows are truncations of $f_1\circ g_i, \ldots, f_{r_L} \circ g_i \in \bar{C}[[x]]$ at degree $d + 1$.

\begin{lemma}\label{lem:varepsilon}
  We can write $\varepsilon(g_i, A) = A \cdot T(g_i)$,
  where $T(g_i)$ is an upper triangular $(d + 1) \times (d + 1)$-matrix depending only on $g_i$
  with $1, \beta_i, \ldots, \beta_i^d$ on the diagonal.

  Futhermore, if $\beta_i = 0$ and $g_i(x) = \alpha_i + \gamma_i x^2 + \ldots$, then the $i$-th row of $T(g_i)$
  is zero for $i \geq \frac{d + 3}{2}$,
  and starts with $2(i - 1)$ zeroes and $\gamma_i^{i - 1}$ for $i<\frac{d+3}2$.
\end{lemma}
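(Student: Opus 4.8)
The plan is to compute $f_k\circ g_i$ directly from the Taylor data stored in the rows of $A$. Write the $k$-th row of $A$ as $(a_{k,0},\dots,a_{k,d})$, so that the associated series is $f_k=\sum_{j=0}^d a_{k,j}(x-\alpha_i)^j$, truncated at degree $d+1$. Substituting $x-\alpha_i\mapsto g_i(x)-\alpha_i=\beta_i x+\gamma_i x^2+\cdots\in\bar C[[x]]$ gives $f_k\circ g_i=\sum_{j=0}^d a_{k,j}\,(g_i(x)-\alpha_i)^j$. For $0\le j\le d$ expand $(g_i(x)-\alpha_i)^j=\sum_{m\ge 0}T_{j,m}\,x^m$; since $g_i(x)-\alpha_i$ has order exactly $1$ in $x$ with leading coefficient $\beta_i$, the series $(g_i(x)-\alpha_i)^j$ has order $j$ with leading coefficient $\beta_i^j$, so $T_{j,m}=0$ for $m<j$ and $T_{j,j}=\beta_i^j$. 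Collecting the $T_{j,m}$ with $0\le j,m\le d$ into a $(d+1)\times(d+1)$ matrix $T(g_i)$ (rows indexed by $j$, columns by $m$) yields precisely the claimed upper triangular matrix with diagonal $1,\beta_i,\dots,\beta_i^d$, and it manifestly depends only on $g_i$.

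Reading off the coefficient of $x^m$ in $f_k\circ g_i$ for $0\le m\le d$ then gives $\sum_{j=0}^d a_{k,j}T_{j,m}=\sum_{j=0}^m a_{k,j}T_{j,m}$, which is exactly the $(k,m)$-entry of $A\,T(g_i)$. Because only the indices $j\le m\le d$ contribute, this coefficient is unaffected by the truncation of $f_k$ at degree $d+1$; hence the $(d+1)$-term truncation of $f_k\circ g_i$, i.e.\ the $k$-th row of $\varepsilon(g_i,A)$, coincides with the $k$-th row of $A\,T(g_i)$. This establishes $\varepsilon(g_i,A)=A\,T(g_i)$.

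For the final assertion, suppose $\beta_i=0$ and $g_i(x)=\alpha_i+\gamma_i x^2+\cdots$. Then $g_i(x)-\alpha_i$ has order $2$ with leading coefficient $\gamma_i$, so $(g_i(x)-\alpha_i)^j$ has order $2j$ with leading coefficient $\gamma_i^j$; thus $T_{j,m}=0$ for $m<2j$ and $T_{j,2j}=\gamma_i^j$. In particular the row of $T(g_i)$ indexed by $j$ is identically zero as soon as $2j>d$. Translating to the $1$-indexed ``$i$-th row'' (the row indexed by $j=i-1$), it starts with $2(i-1)$ zeroes followed by $\gamma_i^{i-1}$, and it vanishes whenever $2(i-1)>d$, which in particular holds for $i\ge\frac{d+3}{2}$ (and the entry $\gamma_i^{i-1}$ genuinely sits in column $2(i-1)\le d$ when $i<\frac{d+3}{2}$).

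There is no substantial obstacle here: the computation is routine. The only points that need a little care are verifying that forming $\varepsilon(g_i,A)$ (composition followed by truncation) agrees with truncating the composed series, which is immediate from $T_{j,m}=0$ for $m<j$, and keeping the $0$- versus $1$-indexing straight when passing from the matrix $T(g_i)$ to the ``$i$-th row'' formulation in the statement.
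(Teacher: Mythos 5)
Your proof is correct and follows essentially the same route as the paper's: linearity of substitution in the coefficients gives the matrix form, the order of $(g_i(x)-\alpha_i)^j$ gives upper triangularity and the diagonal $\beta_i^j$, and the order-$2j$ computation when $\beta_i=0$ gives the second claim. You are in fact somewhat more explicit than the paper (which leaves the second claim as "a similar computation"), and your remark that truncating $f_k$ at degree $d+1$ does not affect the first $d+1$ coefficients of $f_k\circ g_i$ correctly justifies that $\varepsilon(g_i,A)$ is well defined from $A$ alone.
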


\begin{proof}
  Let the $j$-th row of $A$ correspond to a polynomial $f_j(x - \alpha_i) = x^{j - 1} + O(x^{r_L})$.
  The substitution operation $f_j \to f_j \circ g_i$ is linear with respect to coefficients of $f_i$, so $\varepsilon(g_i, A) = A \cdot T(g_i)$ for some matrix $T(g_i)$.
  Since the coefficient of $x^k$ in $f_j \circ g_i$ is a linear combination of coefficients of $(x - \alpha_i)^l$ with $l \leq k$ in $f_j$, the matrix $T(g_i)$ is upper triangular.
  Since $(x - \alpha_i)^k \circ g_i = \beta_i^k x^k + O(x^{k + 1})$, $T(g_i)$ has $1, \beta_i, \ldots, \beta_i^d$ on the diagonal.

  The second claim of the lemma can be verified by a similar computation.
\end{proof}

\begin{corollary}\label{cor:varepsilon}
  If $\beta_i \neq 0$, then the matrix $\varepsilon(g_i, A)$ has the form $(A_0\; A_1)$, where $A_0$ is an upper triangular matrix over $\bar{C}$, and the entries of $A_1$ are linearly independent linear forms in the entries of~$A$.
\end{corollary}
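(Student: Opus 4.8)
The plan is to read the claim straight off the factorization $\varepsilon(g_i,A)=A\cdot T(g_i)$ supplied by Lemma~\ref{lem:varepsilon}, exploiting that $T(g_i)$ is upper triangular with diagonal $1,\beta_i,\dots,\beta_i^{d}$. First I would recall that, by the definition of a fundamental matrix, $A$ splits as $A=(\,I_{r_L}\ \mid\ A'\,)$ with $A'$ the $r_L\times(d+1-r_L)$ block whose entries are exactly the free coordinates on $F(r_L,d)$; thus "the entries of $A$" in the statement are the entries of $A'$. I would then partition $T(g_i)$ along the matching split of its $d+1$ rows and columns into the first $r_L$ and the last $d+1-r_L$:
\[
T(g_i)=\begin{pmatrix} T_{00} & T_{01}\\ 0 & T_{11}\end{pmatrix},
\]
where the lower-left block vanishes precisely because $T(g_i)$ is upper triangular; here $T_{00}$ is $r_L\times r_L$ upper triangular with diagonal $1,\beta_i,\dots,\beta_i^{r_L-1}$, and $T_{11}$ is $(d+1-r_L)\times(d+1-r_L)$ upper triangular with diagonal $\beta_i^{r_L},\dots,\beta_i^{d}$.

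Next I would simply multiply out: $\varepsilon(g_i,A)=(\,I_{r_L}\ \mid\ A'\,)\,T(g_i)=(\,T_{00}\ \mid\ T_{01}+A'T_{11}\,)$. This identifies $A_0=T_{00}$, an upper triangular matrix with entries in $\bar C$ (it depends only on $g_i$ and not on the free entries of $A$), and $A_1=T_{01}+A'T_{11}$. Since $\beta_i\neq 0$, all diagonal entries of $T_{11}$ are nonzero, so $T_{11}$ is invertible, and $A'\mapsto A'T_{11}$ is a linear automorphism of the space of $r_L\times(d+1-r_L)$ matrices. Consequently the $r_L(d+1-r_L)$ entries of $A'T_{11}$ are linearly independent linear forms in the entries of $A'$ (for each fixed row index the entries form an invertible linear image of that row of $A'$, and distinct rows involve disjoint free coordinates), and adding the constant block $T_{01}$ does not change the linear parts. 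Hence the entries of $A_1$ are linearly independent linear forms in the entries of $A$, which is the assertion.

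The argument is essentially bookkeeping, so I do not expect a genuine obstacle. The one place where the hypothesis is really used is the invertibility of $T_{11}$, which is exactly what promotes "linear forms" to "linearly independent linear forms"; the only thing to be careful about is aligning the column/row partition of $T(g_i)$ with the $(\,I_{r_L}\mid A'\,)$ partition of $A$ so that the product comes out in the stated block form.
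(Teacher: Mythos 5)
Your proposal is correct and is exactly the intended derivation: the paper gives no separate proof of the corollary, leaving it as the immediate consequence of Lemma~\ref{lem:varepsilon} obtained by block-multiplying $(I_{r_L}\mid A')$ against the upper triangular $T(g_i)$, with the invertibility of the lower-right block (diagonal $\beta_i^{r_L},\dots,\beta_i^{d}\neq 0$) giving the linear independence. Your bookkeeping, including the identification $A_0=T_{00}$ and the observation that the constant block $T_{01}$ does not affect the linear parts, matches what the paper tacitly assumes.
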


An element of the affine space $W = \left( F(r_L, r_L r_P)\right)^{r_P}$ is a tuple of matrices $N_1, \ldots, N_{r_P} \in F(r_L, r_Lr_P)$, where
every $N_i$ has the form $N_i = (E_{r_L} \; \tilde{N}_i)$.
Entries of $\tilde{N}_1, \ldots, \tilde{N}_{r_P}$ are coordinates on $W$, so we will view entries of $\tilde{N}_i$ as a set $X_i$ of algebraically independent variables.
We will represent $N$ as a single $(r_L r_P) \times (r_L r_P + 1)$-matrix
$$
N = \begin{pmatrix} N_1 \\ \vdots \\ N_{r_P} \end{pmatrix},\text{ and set } \varepsilon(N) = \begin{pmatrix} \varepsilon(g_1, N_1) \\ \vdots \\ \varepsilon(g_{r_P}, N_{r_P}) \end{pmatrix}.
$$

For any matrix $A$, by $A_{(1)}$ and $A_{(2)}$ we denote $A$ without the last column and without the last but one column, respectively.
By $\pi$ we denote the composition $\varepsilon \circ \varphi$.
Since $\pi(L)$ represents solutions of $M(L)$ at zero truncated at degree $r_Lr_P + 1$, properties of the operator $L \in V$ can be described in terms of the matrix $\pi(L)$:
\begin{itemize}
  \item $M(L)$ has order less than $r_L r_P$ or has an apparent singularity at zero iff $\pi(L)_{(1)}$ is degenerate;
  \item $M(L)$ has order less than $r_L r_P$ or has an apparent singularity at zero which is either not removable at cost one or of degree greater than one iff both $\pi(L)_{(1)}$ and $\pi(L)_{(2)}$ are degenerate.
\end{itemize}
Let $X_0 = \{ L \in V \mid \det\pi(L)_{(1)} = 0\}$ and $Y_0 = \{ L \in V \mid \det\pi(L)_{(2)} = 0\}$,
then $X_0 \setminus Y_0 \subset X \subset X_0$ and $Y \subset Y_0$.

\begin{proposition}\label{prop:irreducible}
  $\varphi(X_0)$ is an irreducible subset of $W$, and $\varphi(Y_0)$ is a proper algebraic subset of $\varphi(X_0)$.
\end{proposition}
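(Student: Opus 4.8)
The plan is to work with the matrix $\varepsilon(N)$ as a polynomial matrix over $\bar C[X_1,\dots,X_{r_P}]$ and to use Corollary~\ref{cor:varepsilon} together with Proposition~\ref{prop:generic_solutions} to transfer statements about $\varphi(X_0), \varphi(Y_0)$ to statements about the hypersurfaces $\{\det\varepsilon(N)_{(1)}=0\}$ and $\{\det\varepsilon(N)_{(1)}=0=\det\varepsilon(N)_{(2)}\}$ inside $W$. First I would record that, by Proposition~\ref{prop:generic_solutions}, $\varphi$ is surjective with equidimensional fibers, so $\varphi(X_0)=\{N\in W\mid \det\varepsilon(N)_{(1)}=0\}$ and $\varphi(Y_0)=\{N\in W\mid \det\varepsilon(N)_{(1)}=0\}\cap\{N\in W\mid \det\varepsilon(N)_{(2)}=0\}$; hence it suffices to prove the two assertions for these subsets of $W$.

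For irreducibility of $\varphi(X_0)$: I would show the polynomial $p(N):=\det\varepsilon(N)_{(1)}\in\bar C[X_1,\dots,X_{r_P}]$ is irreducible, which immediately gives that its zero set is irreducible. To compute $p$, note $\varepsilon(N)_{(1)}$ is the $(r_Lr_P)\times(r_Lr_P)$ matrix obtained by stacking the blocks $\varepsilon(g_i,N_i)$ and deleting the last column. Using Lemma~\ref{lem:varepsilon}, each $\varepsilon(g_i,N_i)=N_i\,T(g_i)$, and since $\beta_2,\dots,\beta_{r_P}\neq0$ by~\hyperref[g]{(G)}, Corollary~\ref{cor:varepsilon} says the blocks for $i\geq2$ are $(A_0^{(i)}\;A_1^{(i)})$ with $A_0^{(i)}$ invertible upper-triangular over $\bar C$ and $A_1^{(i)}$ consisting of independent linear forms in the variables $X_i$. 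The plan is to expand $p$ by a generalized Laplace expansion along the row-blocks for $i=2,\dots,r_P$: because those blocks contribute a full-rank constant part $A_0^{(i)}$, the highest-degree part of $p$ in the variables $X_2,\dots,X_{r_P}$ is, up to a nonzero scalar, the minor of $N_1$ (resp. the $x$-block built from $N_1$ and the $\gamma_1$ or $\beta_1$ data) times a product of determinants of the $A_1^{(i)}$-type blocks, all of which are nonzero polynomials in disjoint sets of variables. One then argues that $p$ is, up to units, a product of these manifestly irreducible, variable-disjoint factors — or more robustly, that $p$ is linear (hence irreducible) in the entries of some single $X_i$-block after fixing generic values of the others, and that it does not factor because the supports of the variable-groups are disjoint. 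Either route shows $p$ is irreducible, so $\varphi(X_0)=\{p=0\}$ is irreducible.

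For the second assertion, that $\varphi(Y_0)\subsetneq\varphi(X_0)$: since $\varphi(X_0)$ is irreducible, it suffices to exhibit one point $N\in W$ with $p(N)=\det\varepsilon(N)_{(1)}=0$ but $q(N):=\det\varepsilon(N)_{(2)}\neq0$. Here condition~\hyperref[g]{(G)}, specifically that either $\beta_1$ or $\gamma_1$ is nonzero, is exactly what is needed: the last column and the last-but-one column of $\varepsilon(g_1,N_1)$ differ in a way controlled by $\beta_1,\gamma_1$, so that $\det\varepsilon(N)_{(1)}$ and $\det\varepsilon(N)_{(2)}$ are not proportional as polynomials. Concretely I would pick $N_1$ so that the column vector of its last relevant minors makes $p=0$ but perturb within the last two columns (using the $\beta_1$- or $\gamma_1$-structure of $T(g_1)$ from Lemma~\ref{lem:varepsilon}) to keep $q\neq0$; equivalently, one shows $q$ is not in the ideal $(p)$, which follows from degree/support considerations once $p$ is written in the factored form above. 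Since $\varphi(Y_0)=\{p=0\}\cap\{q=0\}$ and $q$ does not vanish identically on the irreducible set $\{p=0\}$, it is a proper algebraic subset.

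The main obstacle I expect is the explicit determinant computation proving irreducibility of $p$ — controlling the Laplace expansion over the $r_P-1$ constant-rank blocks carefully enough to identify the leading term and conclude no nontrivial factorization is possible. The transfer via Proposition~\ref{prop:generic_solutions} and the non-containment argument are comparatively routine once~\hyperref[g]{(G)} is invoked to separate the behavior of the last two columns.
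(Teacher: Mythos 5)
Your reduction via Proposition~\ref{prop:generic_solutions} to the two polynomials $p=\det\varepsilon(N)_{(1)}$ and $q=\det\varepsilon(N)_{(2)}$ is exactly the paper's first step, and your treatment of properness (show that $q$ is not divisible by the irreducible $p$; the paper does this by noting that both have degree $r_Lr_P-r_L$ in the entries of $\tilde N_2,\dots,\tilde N_{r_P}$ but involve different subsets of those variables) is essentially the paper's. The genuine gap is in the irreducibility of $p$, which is the heart of the proposition. Both routes you offer are logically broken: if $p$ were ``up to units a product of irreducible, variable-disjoint factors'' with more than one factor, it would be \emph{reducible}, the opposite of what you want; and ``linear in the entries of a single $X_i$-block, hence irreducible'' is false ($xy$ is linear in $x$), nor does irreducibility of a generic specialization transfer back in the way you need (a product $q(X_1)r(X_2)$ specializes, for generic $X_1$, to a constant times the possibly irreducible $r(X_2)$). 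The Laplace expansion only identifies the top-degree part of $p$ in $X_2,\dots,X_{r_P}$, which by itself says nothing about how $p$ factors.

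What is actually needed --- and what the paper does --- is to exhibit, after permuting rows and columns of $A=\varepsilon(N)_{(1)}$, a block form $\bigl(\begin{smallmatrix}B & C_1\\ C_2 & D\end{smallmatrix}\bigr)$ in which $B$ is a constant invertible upper triangular square block and the entries of $D$ are algebraically independent over the subalgebra generated by the entries of $B$, $C_1$, $C_2$. Row reduction by $B$ then gives $\det A=\det B\cdot\det\widetilde D$ with $\widetilde D$ still having algebraically independent entries, and irreducibility follows from the classical fact that the determinant of a matrix of independent indeterminates is irreducible. Producing such a $B$ is where condition (G) enters and where a case split is unavoidable: if $\beta_1\neq0$ one takes the $r_L\times r_L$ constant block of $\varepsilon(g_1,N_1)$ from Corollary~\ref{cor:varepsilon}, but if $\beta_1=0$ one must exploit $\gamma_1\neq0$ and the finer structure of $T(g_1)$ from Lemma~\ref{lem:varepsilon} to assemble a larger ($\lfloor 3r_L/2\rfloor\times\lfloor 3r_L/2\rfloor$) triangular block by an explicit permutation; your sketch never engages with this case. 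Without the algebraic-independence step your argument does not establish irreducibility, and the rest of the proof of Theorem~\ref{thm:generic} depends on it.
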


\begin{proof}
  The above discussion and the surjectivity of $\varphi$ imply that $\varphi(X_0) = \{ N \in W \mid \det\varepsilon(N)_{(1)} = 0\}$.
  Hence, we need to prove that $\det\varepsilon(N)_{(1)}$ is a nonzero irreducible polynomial in $R = \bar{C}[ X_1, \ldots, X_{r_P} ]$.
  We set $A = \varepsilon(N)_{(1)}$.
  
  We claim that there is a way to reorder columns and rows of $A$ such that it will be of the form
  $$
  \begin{pmatrix}
    B & C_1 \\
    C_2 & D
  \end{pmatrix},
  $$
  where $B$ and $D$ are square matrices, and
  \begin{itemize}
    \item $B$ is upper triangular with nonzero elements of $\bar{C}$ on the diagonal;
    \item entries of $D$ are algebraically independent over the subalgebra generated in $R$ by entries of $B, C_1$, and $C_2$.
  \end{itemize}
  In order to prove the claim we consider two cases:
  \begin{enumerate}
    \item $\beta_1 \neq 0$. By Corollary~\ref{cor:varepsilon}, $A$ is already of the desired form with $B$ being an $r_L \times r_L$-submatrix.
    \item $\beta_1 = 0$. Then~\hyperref[g]{(G)} implies that $g_1(x) = \alpha_1 + \gamma_1 x^2 + \ldots$ with $\gamma_1 \neq 0$.
    Then Lemma~\ref{lem:varepsilon} implies that the following permutations would give us the desired block structure with $B$ being an $\lfloor 3r_L / 2 \rfloor \times \lfloor 3r_L / 2 \rfloor$-submatrix,
    for columns:
    \[
    1, 3, \ldots, 2r_L - 1, 2, 4, \ldots, 2\lfloor r_L / 2\rfloor, \ast,
    \]
    and for rows:
    \[
    1, 2, \ldots, r_L,r_L + 2, r_L + 4, \ldots, r_L + 2\lfloor r_L / 2 \rfloor,\ast,
    \]
    where $\ast$ stands for all other indices in any order.
  \end{enumerate}
  Using elementary row operations, we can bring $A$ to the form
  $$
  \begin{pmatrix}
    B & \ast \\
    0 & \widetilde{D}
  \end{pmatrix},
  $$
  where the entries of $\widetilde D$ are still algebraically independent.
  Hence, $\det A$ is proportional to $\det \widetilde{D}$ which is irreducible.

  In order to prove that $\varphi(Y_0)$ is a proper subset of $\varphi(X_0)$ it is sufficient to prove that $\det \varepsilon(N)_{(2)}$ is not divisible by $\det \varepsilon(N)_{(1)}$.
  This follows from the fact that these polynomials are both of degree $r_Lr_P - r_L$ with respect to (algebraically independent) 
  entries of $\tilde{N}_2, \ldots, \tilde{N}_{r_P}$, but involve different subsets of this variable set.
\end{proof}

Now we can complete the proof of Theorem~\ref{thm:generic}.
Proposition~\ref{prop:irreducible} implies that $\dim \varphi(X_0) > \dim \varphi(Y_0)$.
Since all fibers of $\varphi$ have the same dimension, $\dim X_0 > \dim Y_0$.
Hence, $\dim X \geq \dim (X_0 \setminus Y_0) = \dim X_0 > \dim Y_0 \geq \dim Y$.

\begin{remark}
  Theorem~\ref{thm:generic} is stated only for points satisfying~\hyperref[s1]{(S1)} and~\hyperref[s2]{(S2)}.
  However, the proof implies that every such point is generically nonsingular.
  We expect that the same technique can be used to prove that generically no removable singularities occur
  in points violating conditions~\hyperref[s1]{(S1)} and~\hyperref[s2]{(S2)}.
  This expectation agrees with our computational experiments with random operators and random polynomials.
  We think that these experimental results and Theorem~\ref{thm:generic} justify the choice $c = 1$ in Theorem~\ref{thm:curve}
  in most applications. 
\end{remark}

\begin{remark}
  On the other hand, neither Theorem~\ref{thm:generic} nor our experiments support the choice $c=1$ in the
  case $r_P=1$. Instead, it seems that in this case the cost for removability is systematically larger. To
  see why, consider the special case $P=y-x^2$ of substituting the polynomial $g(x)=x^2$ into a 
  solution $f$ of a generic operator~$L$. If the solution space of $L$ admits a basis of the form
  \begin{alignat*}5
    &1&&{} + a_{1,r_L}x^{r_L}&&{} + a_{1,r_L+1}x^{r_L+1} + \cdots,\\
    &x&&{} + a_{2,r_L}x^{r_L}&&{} + a_{2,r_L+1}x^{r_L+1} + \cdots,\\
    &\vdots\\
    &x^{r_L-1}&&{} + a_{r_L-1,r_L}x^{r_L}&&{} + a_{r_L-1,r_L+1}x^{r_L+1} + \cdots,
  \end{alignat*}
  and $M$ is the minimal operator for the composition, then its solution space obviously has the basis
  \begin{alignat*}5
    &1&&{} + a_{1,r_L}x^{2r_L}&&{} + a_{1,r_L+1}x^{2r_L+2} + \cdots,\\
    &x^2&&{} + a_{2,r_L}x^{2r_L}&&{} + a_{2,r_L+1}x^{2r_L+2} + \cdots,\\
    &\vdots\\
    &x^{2(r_L-1)}&&{} + a_{r_L-1,r_L}x^{2r_L}&&{} + a_{r_L-1,r_L+1}x^{2r_L+2} + \cdots,
  \end{alignat*}
  and so the indicial polynomial of $M$ is $\lambda(\lambda-2)\cdots(\lambda-2(r_L-1))$.
  According to the theory of apparent singularities~\cite{ince26,chenetal2016}, $M$~has a removable
  singularity at the origin and the cost of removability is as high as~$r_L$.

  More generally, if $g$ is a rational function and $\alpha$ is a root of~$g'$, so that
  $g(x)=c+\O((x-\alpha)^2)$, a reasoning along the same lines confirms that such
  an $\alpha$ will also be a removable singularity with cost~$r_L$.
\end{remark}


\noindent\textbf{Acknowledgement.} We thank the referees for their constructive critizism.

\bibliographystyle{plain}
\bibliography{bib}

\begin{thebibliography}{10}

\bibitem{bliss08}
G.A. Bliss.
\newblock {\em Algebraic functions}.
\newblock AMS, 1933.

\bibitem{bostan07}
A.~Bostan, F.~Chyzak, B.~Salvy, G.~Lecerf, and {\'E}.~Schost.
\newblock Differential equations for algebraic functions.
\newblock In {\em Proc ISSAC'07}, pages 25--32, 2007.

\bibitem{chen12c}
S.~Chen and M.~Kauers.
\newblock Order-degree curves for hypergeometric creative telescoping.
\newblock In {\em Proc of ISSAC'12}, pages 122--129, 2012.

\bibitem{chen12b}
S.~Chen and M.~Kauers.
\newblock Trading order for degree in creative telescoping.
\newblock {\em J Symb Comput}, 47(8):968--995, 2012.

\bibitem{chenetal2016}
S.~Chen, M.~Kauers, and M.F. Singer.
\newblock Desingularization of {O}re polynomials.
\newblock {\em J Symb Comput}, 74(5/6):617--626, 2016.

\bibitem{ince26}
E.~L. Ince.
\newblock {\em Ordinary Differential Equations}.
\newblock Dover, 1926.

\bibitem{jaroschek13a}
M.~Jaroschek, M.~Kauers, S.~Chen, and M.F. Singer.
\newblock Desingularization explains order-degree curves for {Ore} operators.
\newblock In {\em Proc ISSAC'13}, pages 157--164, 2013.

\bibitem{kauers14f}
M.~Kauers.
\newblock Bounds for {D}-finite closure properties.
\newblock In {\em Proc ISSAC'14}, pages 288--295, 2014.

\bibitem{stanley80}
R.P. Stanley.
\newblock Differentiably finite power series.
\newblock {\em Euro J Combinat}, 1(2):175--188, 1980.

\bibitem{tsai00}
H.~Tsai.
\newblock Weyl closure of a linear differential operator.
\newblock {\em J Symb Comput}, 29(4/5):747--775, 2000.

\bibitem{vanhoeij97a}
M.~van Hoeij.
\newblock Formal solutions and factorization of differential operators with
  power series coefficients.
\newblock {\em J Symb Comput}, 24(1):1--30, 1997.

\end{thebibliography}

\end{document}